\titlespacing*{\paragraph}{0pt}{0.5ex}{0.5em}
\tikzset{big dots/.style={
  decoration={markings,
    mark=between positions 0.1 and 0.9 step 3pt with {
      \fill circle[radius=0.5pt];
    }
  },
  postaction={decorate}
}}
\numberwithin{equation}{section}
\newtheorem{theorem/}[equation]{Theorem}
\newenvironment{theorem}
  {%
   \pushQED{\qed}\begin{theorem/}}
  {\popQED\end{theorem/}}
\newtheorem{lemma/}[equation]{Lemma}
\newtheorem{claim/}[equation]{Claim}
\newtheorem{corollary/}[equation]{Corollary}
\newenvironment{corollary}
  {%
   \pushQED{\qed}\begin{corollary/}}
  {\popQED\end{corollary/}}
\newtheorem{proposition/}[equation]{Proposition}
\newenvironment{proposition}
  {%
   \pushQED{\qed}\begin{proposition/}}
  {\popQED\end{proposition/}}
\theoremstyle{definition}
\newtheorem{definition/}[equation]{Definition}
\newtheorem{example/}[equation]{Example}
\newtheorem{remark/}[equation]{Remark}
\newtheorem*{example*}{Example}
\newtheorem{caveat/}[equation]{Caveat}
\newtheorem*{caveat*}{Caveat}
\newtheorem{conjecture/}[equation]{Conjecture}
\newtheorem*{conjecture*}{Conjecture}
\newtheorem{openproblem/}[equation]{Open Problem}
\newenvironment{openproblem}
  {%
   \pushQED{\qed}\begin{openproblem/}}
  {\popQED\end{openproblem/}}
\newtheorem*{openproblem*}{Open Problem}
\newcommand{\poly}{\textup{\textsf{poly}}}
\newcommand{\IZ}{\mathbb{Z}}
\newcommand{\la}{\lambda}
\newcommand{\IN}{\mathbb{N}}
\newcommand{\aS}{\mathfrak{S}}
\newcommand{\GL}{\textup{GL}}
\newcommand{\diag}{\textup{diag}}
\newcommand{\Clique}{\mathrm{Clique}}
\renewcommand{\det}{\mathrm{det}}
\newcommand{\IMM}{\textup{IMM}}
\newcommand{\transpose}{\textup{\textsf{T}}}
\newcommand{\w}{\ensuremath{w}}
\newcommand{\wh}{\ensuremath{\w_{\textup{h}}}}
\newcommand{\asize}{\textup{\textsf{ac}}} 
\newcommand{\dc}{\textup{\textsf{dc}}}
\DeclareMathOperator{\tr}{tr}
\DeclareMathOperator{\sgn}{sgn}
\DeclareMathOperator{\cyco}{CC}
\DeclareMathOperator{\CCP}{CCP}
\newcommand{\cpc}{\chi}
\newcommand{\trp}{\mathrm{tr\textup{-}pow}}
\newcommand{\pow}{\mathrm{pow}}
\newcommand{\GapL}{\mathbf{\mathrm{GapL}}}
\newcommand{\VW}{\mathrm{VW}}
\newcommand{\VFPT}{\mathrm{VFPT}}
\newcommand{\VBFPT}{\mathrm{VBFPT}}
\newcommand{\VBP}{\mathrm{VBP}}
\newcommand{\VNP}{\mathrm{VNP}}
\def\pgfdecoratedcontourdistance{0pt}
    \pgfmathsetlengthmacro\pgfdecoratedcontourdistance{#1}}
\let\pgf@decorate@firstsegmentangle\pgfdecoratedangle]{%
    \pgfpathmoveto{\pgfpointlineattime{.5}
      {\pgfqpoint{0pt}{\pgfdecoratedcontourdistance}}
      {\pgfqpoint{\pgfdecoratedinputsegmentlength}{\pgfdecoratedcontourdistance}}}%
  }%
      \pgfmathsetmacro\pgfdecoratedangletonextinputsegment{%
        -\pgfdecoratedangle+\pgf@decorate@firstsegmentangle}%
    \pgfmathsetlengthmacro\pgf@decoration@contour@shorten{%
      -\pgfdecoratedcontourdistance*cot(-\pgfdecoratedangletonextinputsegment/2+90)}%
\tikzset{
  contour/.style={
    decoration={
      name=contour lineto closed,
      contour distance=#1
    },
    decorate}}
\title{On the gradient of the coefficient of the characteristic polynomial}
\author{Christian Ikenmeyer}
\date{November 7, 2025}
\begin{document}
\raggedbottom
\maketitle

\setlength{\abovedisplayskip}{3pt}
\setlength{\belowdisplayskip}{3pt}

\begin{abstract}
We prove the bivariate Cayley--Hamilton theorem, a powerful generalization of the classical Cayley--Hamilton theorem.
The bivariate Cayley--Hamilton theorem has three direct corollaries that are usually proved independently: The classical Cayley--Hamilton theorem, the Girard-Newton identities, and the fact that the determinant and every coefficient of the characteristic polynomial has polynomially sized algebraic branching programs (ABPs) over arbitrary commutative rings.
This last fact could so far only be obtained from separate constructions, and now we get it as a direct consequence of this much more general statement.

The statement of the bivariate Cayley--Hamilton theorem involves the gradient of the coefficient of the characteristic polynomial, which is a generalization of the adjugate matrix.
Analyzing this gradient, we obtain 
another new ABP for the determinant and every coefficient of the characteristic polynomial.
This ABP has one third the size and half the width compared to the current record-holder ABP constructed by Mahajan--Vinay in 1997.
This is the first improvement on this problem for 28 years.

Our ABP is built around algebraic identities involving the first order partial derivatives of the coefficients of the characteristic polynomial, and does not use the ad-hoc combinatorial concept of clow sequences.
This answers the 26-year-old open question by Mahajan--Vinay from 1999 about the necessity of clow sequences.

We prove all results in a combinatorial way that on a first sight looks similar to Mahajan--Vinay, but it is closer to Straubing's and Zeilberger's constructions.
\end{abstract}

\bigskip
\bigskip

\noindent{\begin{tabular}{p{1.5cm}p{13.5cm}}
\footnotesize \textbf{Keywords: }
& \footnotesize
Cayley--Hamilton theorem, Girard--Newton identities, determinant, characteristic polynomial, algebraic complexity theory, GapL, parameterized complexity
\end{tabular}
}

\bigskip

\begingroup
\setlength{\cftbeforesecskip}{0.5em}
\setcounter{tocdepth}{1}
\tableofcontents
\endgroup

\begingroup
  \renewcommand\thefootnote{}
  \footnotetext{For the purpose of open access, the author has applied a Creative Commons Attribution (CC-BY) license to any Author Accepted Manuscript version arising from this submission.}%
\endgroup

\thispagestyle{empty}
\clearpage
\pagenumbering{arabic}

\section{Introduction}
\label{sec:intro}

The determinant polynomial
\[
\det_d \ = \ \sum_{\pi\in\aS_d} \sgn(\pi) \, x_{i,\pi(i)}
\]
is the most prominent of the coefficients of the characteristic polynomial.
It is ubiquitous throughout mathematics and has been studied for over 200 years.
In counting complexity theory, it is complete for the class $\GapL$,
and in algebraic complexity theory, it is complete for the class $\VBP$.
The Cayley--Hamilton theorem is one of the most fundamental results in linear algebra. It states that every square matrix is a root of its own characteristic polynomial.
If $X_n$ is the $n\times n$ matrix of variables $X_n = \Big(x_{i,j}\Big)_{\substack{1\leq i\leq n\\1 \leq j \leq n}}$,
the characteristic polynomial is defined as
\begin{equation}\label{eq:cpct}
\det(X_n+t \cdot I_d)
 \ = \ \sum_{d\geq 0} \cpc_{n,n-d} \cdot t^d
\end{equation}
where $I_n$ is the $n\times n$ identity matrix and $t$ is an indeterminate.
The Cayley--Hamilton theorem states that $\sum_{i=0}^n (-1)^i \cpc_{n,n-i} X_n^i$ is the zero matrix.
Equation~\eqref{eq:cpct} defines the homogeneous degree $d$ coefficient polynomials $\cpc_{n,d}$, which will play a major role in this paper.
From \eqref{eq:cpct} it follows that
\begin{equation}\label{eq:cpcS}
\cpc_{n,d} \ = \ \sum_{S\in\binom{\{1,\ldots,n\}}{d}}\det([X_n]_{S,S}),
\end{equation}
where $[X]_{S,S}$ is the square submatrix of $X$ with row indices $S$ and column indices $S$,
and $\binom{A}{d}$ is the set of cardinality $d$ subsets of the set~$A$.
Note that $\cpc_{d,d}=\det_d$ and $\cpc_{n,1}=\tr_n := \tr(X_n)$ is the trace.

\paragraph{Commutative rings}
The above discussion about the determinant, the characteristic polynomial, and the Cayley--Hamilton theorem works over arbitrary commutative rings~$R$, and this is the generality we work in. This is the common setup in commutative algebra, see for example \cite[\S0.1]{Eis13}: ``Nearly every ring treated in this book is commutative, and we shall generally omit the adjective''.
The specific search for algorithms that work over all commutative rings goes at least back to \cite{Val92} and the concept of Valiant's ``ARP'' (all rings polynomially-sized circuits),
and is again prominently featured in \cite{MV97}.
Let $R[\mathbf x]$ denote the polynomial ring over $R$, and let $R[\mathbf x]_d$ denote its homogeneous degree $d$ component. Note that $\cpc_{n,d}\in R[\mathbf x]_d$.

In algebraic complexity theory, constructions do not always work over arbitrary commutative rings, for example when the Girard--Newton identities are involved. This was an issue in the recent breakthrough lower bound \cite{LST25}, which was made field-independent later in \cite{For24} (best paper award at CCC).
We shed some light on the Girard--Newton identities in \S\ref{sec:corollaries}, and in fact our bivariate Cayley--Hamilton theorem is a generalization of those identities.
Field-independent replacements for field-dependent constructions are a goal in several areas of mathematics, for example algebraic geometry, representation theory, and invariant theory, see for example \cite{Eis13,AFPRW19,DV22}.

\paragraph{Motivation: Width as the bivariate perspective on ABPs}
Every homogeneous degree $d$ polynomial $f$ can be written as the bottom-right entry of a product of exactly $d$ many $n\times n$ matrices
whose entries are homogeneous linear polynomials, for some large~$n$.
The \emph{width} $\w(f)$ is defined as the smallest $n$ such that this is possible.
We call a $d$-tuple of such $n\times n$ matrices a pure algebraic branching program (pABP).
A trivial upper bound for $\w(f)$ is the number of monomials of~$f$:
In this straightforward upper bound construction, the first matrix has nonzero entries only in the last row, the last matrix has nonzero entries only in the last column, and all other matrices are diagonal.
The width is robust in the sense that it is the same notion when we study so-called homogeneous algebraic branching programs (ABPs) instead of pABPs, see Proposition~\ref{pro:wwh} below.
An ABP is a directed acyclic graph
in which each vertex has a layer from $0,\ldots,d$,
and each edge is either going from a vertex in layer $i$ to another vertex in layer $i$ and is labeled by an element of $R$ (these are called constant edges),
or the edge is going from a vertex in layer $i$ to a vertex in layer $i+1$ and is labeled by an element of $R[\mathbf x]_1$, see for example Figure~\ref{subfig:compressed}.
One vertex in layer 0 is called the source $s$, and one vertex in layer $d$ is called the sink $t$.
To an ABP $G$ we assign its computation result~$f$, which is the weighted sum over all $s$-$t$-paths,
where each summand is weighted by the product of its edge labels.
We say that $G$ computes~$f$.
The width of an ABP is defined as the maximum number of vertices in any layer $1,\ldots,d-1$.
The smallest width of an ABP computing $f$ equals $\w(f)$, see Proposition~\ref{pro:wwh}.
The ABP is a graph theoretic variant of the algebraically pure notion of the pABP, and we will see in Proposition~\ref{pro:wdetpoly} that it is sometimes convenient to work with ABPs instead of pABPs.
The width is also closely related to the so-called affine algebraic branching program size, see \S\ref{sec:appendix}.

We can think of the pair $\big(\w(f),\deg(f)\big)$ as a highly algebraically idealized way of measuring the running time requirements to evaluate~$f$, which leans heavily towards the algebraic side, but captures arithmetic circuit size up to a quasipolynomial blowup \cite{vzG87}.
The degree measures a part of the computational complexity $f$, but since for any $f\in R[\mathbf x]_d$ the degree is exactly~$d$, the width is the more interesting quantity.
However, unlike all other ways of measuring complexity, width and degree must always be viewed together as a tuple, because
there are sequences of polynomials of constant width with unbounded degree (and hence the evaluation time diverges when $d$ increases)
and
there are sequences of polynomials of constant degree with unbounded width (for example, the degree 3 $n\times n$ matrix multiplication polynomials).
There is the following duality between the degree and the width with respect to subadditivity and submultiplicativity.
For $f,h\in R[\mathbf x]_d$ we have
{
\setlength{\abovedisplayskip}{7pt}
\setlength{\belowdisplayskip}{7pt}
\begin{equation}\label{eq:duality}
\setlength{\arraycolsep}{1pt}
\begin{array}{rl@{\qquad}rl}
\deg(f+h) &\leq \max(\deg(f),\deg(h)) & \deg(fh) &\leq \deg(f)+\deg(h) \\[0.2em]
\w(f+h)   &\leq \w(f)+\w(h)          & \w(fh)   &\leq \max(\w(f),\w(h))
\end{array}
\end{equation}
}
where we used the conventions $\deg(0)=0$ and $\w(0)=0$, see \S\ref{subsec:appendixduality}.
The determinantal complexity $\dc(f)$ is defined as the smallest $n$ such that $f$ can be written as the determinant of an $n\times n$ matrix with entries from $R[\mathbf x]_0 \oplus R[\mathbf x]_1$, i.e., affine linear entries.
The fact that the entries must be affine cannot avoided here, because the determinant polynomial $\det_d$ is just single-indexed. Moreover, while $\dc(fh)\leq \dc(f)+\dc(h)$, no nice relation is known for $\dc(f+h)$ (only with a blowup, and only via a reduction to ABPs).
Fortunately, determinantal complexity is closely related to width as follows:
\begin{equation}
\label{eq:wdc}
\tfrac 1 d \, \dc(f) \ \leq \ \w(f) \ \stackrel{(\ast)}{\leq} \ d^4 \,\dc(f),
\end{equation}
where $(\ast)$ is currently only known to hold over fields, see Proposition~\ref{pro:CKV}.
Clearly, $\dc(\det_d)=d$, but $(\ast)$ cannot be used to prove $\w(\det_d)\in\poly(d)$, because the proof of $(\ast)$ uses this fact as an ingredient.
Compared to other ways of measuring computational complexity, the width can be used to study $n$ and $d$ independently, and hence it is well suited for studying bivariate complexity, which is commonly known as parameterized complexity, see \cite[Preface]{DF13}.

\paragraph{Motivation: The bivariate viewpoint in complexity theory}
The study of bivariate polynomial families where one index is the degree has recently been pioneered in algebraic complexity theory by \cite{BE19},
translating results from Boolean parameterized complexity theory \cite{DF99,Nie06,FG06,DF13} into algebraic complexity theory.
From the Boolean perspective, \cite{DF13} write ``the multivariate perspective has proved useful, even arguably essential''.
We will see in this paper that the multivariate perspective is also valuable in linear algebra.
The general theory in \cite{DF13} allows different so-called ``parameters'',
and finding the ``correct'' parameter can be an art.
But when studying homogeneous polynomials, the canonical choice for the parameter is the degree,
see \cite[Def.~4.7(2), Def.~4.15(2)]{BE19}.
Treating the degree as a separate parameter very recently made its appearance in several other papers in algebraic complexity theory \cite{DGIJL24,CKV24,vdBDGIL25}.
For double-indexed polynomials $f_{n,d}$ we define the exponents
\[
\gamma(f_{.,d}) := \limsup_{n\to\infty}\big(\log_n(\w(f_{n,d}))\big).
\]
As seen in \eqref{eq:wdc}, these exponents are invariant under exchanging the width with determinantal complexity.
They are also invariant under exchanging the width with the affine algebraic branching program size, see \eqref{eq:asizeleqw} and Proposition~\ref{pro:homogenization}.
The complexity class $\VBFPT$ consists of those $f_{n,d}$ for which the set $\{\gamma(f_{.,d})\mid d \in \IN\}$ is bounded, see also Proposition~\ref{pro:VBFPT}.
The univariate class $\VBP$ consists of \footnote{In our definition of $\VBP$, the sequences are indexed by the degree and not by the number of variables, which gives a cleaner theory, see \cite{vdBDGIL25}. This makes no difference for the fundamental complexity theoretic considerations.} those single-indexed $f_d$ for which $\w(f_d)\in\poly(d)$.
The variant of $\VBFPT$ that uses arithmetic circuit complexity instead of width is defined in \cite{BE19}, which gives rise to a potentially larger class~$\VFPT$.
One can relate different conjectures in algebraic complexity theory via the clique polynomial, for example.
Let $\Clique_{n,d}=0$ if $d\neq \binom{k+1}{2}$ for some $k$,
and let
\[
\Clique_{n,\binom{k+1}{2}} \ = \ \sum_{1 \leq v_1<v_2<\cdots<v_k\leq n} \prod_{1\leq i \leq j \leq k} x_{v_i,v_j}.
\]
Then by \cite{BE19} and \cite[Proof~of~Thm~3.10]{Bur00CaR} we have
\begin{center}
\begin{tikzpicture}[xscale=2.5,yscale=0.6]
\node at (0,0) {$\Clique_{n,d} \notin \VFPT$};
\node at (1,0) {$\Rightarrow$};
\node at (2,0) {$\Clique_{n,d} \notin \VBFPT$};
\node at (3,0) {$\Rightarrow$};
\node at (4,0) {$\Clique_{2d,d}\notin\VBP$};
\node at (0,-2) {$\VFPT\neq\VW[1]$};
\node at (1,-2) {$\Rightarrow$};
\node at (2,-2) {$\VBFPT\neq\VW[1]$};
\node at (3,-2) {$\Rightarrow$};
\node at (4,-2) {$\VBP\neq\VNP$};
\node at (0,-1) {\rotatebox{90}{$\Leftrightarrow$}};
\node at (2,-1) {\rotatebox{90}{$\Leftrightarrow$}};
\node at (4,-1) {\rotatebox{90}{$\Leftrightarrow$}};
\end{tikzpicture}
\end{center}
The definitions of $\VNP$ \cite{Bur00CvV} and $\VW[1]$ \cite{BE19} will not play a role in this paper.
The rightmost vertical equivalence holds in characteristic $\neq 2$.
In characteristic $\neq 2$, the conjecture $\VBP\neq\VNP$ is also known as Valiant's determinant vs permanent conjecture.
It can be conveniently expressed bivariately as $\w(\Clique_{n,d})\notin\poly(n,d)$.

The bivariate viewpoint, i.e., the separation of $n$ and the degree $d$,
is also standard for every notion of a \emph{rank} of a tensor or polynomial, see for example~\cite{BL13}.

\paragraph{Main results: The bivariate viewpoint in linear algebra}
We show that the bivariate viewpoint appears naturally in linear algebra,
a priori outside of algebraic complexity theory, but with direct consequences such as $\w(\cpc_{n,d})\in\poly(n,d)$.
For this purpose, we prove our first main result, the bivariate Cayley--Hamilton theorem \ref{thm:CayleyHamilton}.
For a polynomial $f$ in $n^2$ variables $x_{i,j}$, $1\leq i,j\leq n$, let $\nabla f$ denote the $n \times n$ matrix whose entry at position $(i,j)$ is the partial derivative $\frac{\partial f} {\partial x_{i,j}}$, which we also denote by $\partial_{i,j}(f)$.
Note that $\nabla\det_n$ is the cofactor matrix of $X_n$, and its transpose $(\nabla\det_n)^\transpose$ is by definition the so-called adjugate matrix of $X_n$.

Our bivariate Cayley--Hamilton theorem \ref{thm:CayleyHamilton} states that for all $n,d$ we have
\begin{equation}
\tag{\protect{\textup{Thm.~\ref{thm:CayleyHamilton}}}}
(\nabla \cpc_{n,d+1})^\transpose \ = \ \sum_{i=0}^d (-1)^{i} \, \cpc_{n,d-i} \, X_n^i.
\end{equation}
Our theorem has several immediate corollaries, in particular the classical Cayley--Hamilton theorem, the Girard--Newton identities, and $\w(\cpc(n,d))\in\poly(n,d)$,
as depicted in Figure~\ref{fig:implicationstheorems}.
The proofs of all of the corollaries are very short, and we prove them all in \S\ref{sec:corollaries}.

\begin{figure}
\centering
\begin{tikzpicture}[scale=0.95]
\node[draw,rectangle,rounded corners=6pt] (bivCayHam) at (-2,0) {bivariate Cayley--Hamilton Thm~\ref{thm:CayleyHamilton}};
\node[draw,rectangle,align=center,rounded corners=6pt] (traceCayHam) at (0,-2) {trace Cayley--Hamilton\\$=$ matrix Girard--Newton};
\node[draw,rectangle,rounded corners=6pt] (GirardNewton) at (0,-4) {Girard--Newton identities};
\node[draw,rectangle,align=center,rounded corners=6pt] (adjugate) at (-9,-2) {Fadeev--LeVerrier\\adjugate};
\node[draw,rectangle,rounded corners=6pt] (CayHam) at (-4.85,-2) {Cayley--Hamilton Thm};
\node[draw,rectangle,rounded corners=6pt] (detinpoly) at (5,-2) {$\w(\cpc_{n,d})\in\poly(n,d)$};
\node[draw,rectangle,align=center,rounded corners=6pt] (detinpolycharzero) at (5,-4) {$\w(\cpc_{n,d})\in\poly(n,d)$\\over fields of char.\ 0};
\draw[double equal sign distance, -implies, shorten >=2pt, shorten <=2pt] (bivCayHam) -- (adjugate) node [midway,fill=white,inner sep=1pt] {\footnotesize Cor.~\ref{cor:adjugate}};
\draw[double equal sign distance, -implies, shorten >=2pt, shorten <=2pt] (bivCayHam) -- (traceCayHam) node [midway,fill=white,inner sep=1pt] {\footnotesize Cor.~\ref{cor:traceCayHam}};
\draw[double equal sign distance, -implies, shorten >=2pt, shorten <=2pt] (traceCayHam) -- (GirardNewton) node [midway,fill=white,inner sep=1pt] {\footnotesize Cor.~\ref{cor:girardnewton}};
\draw[double equal sign distance, -implies, shorten >=2pt, shorten <=2pt] (detinpoly) -- (detinpolycharzero);
\draw[double equal sign distance, -implies, shorten >=5pt, shorten <=5pt] (traceCayHam) -- (detinpolycharzero) node [midway,fill=white,inner sep=1pt] {\footnotesize Cor.~\ref{cor:wdetpolycharzero}};
\draw[double equal sign distance, -implies, shorten >=5pt, shorten <=5pt] (bivCayHam) -- (CayHam) node [midway,fill=white,inner sep=1pt] {\footnotesize Cor.~\ref{cor:CayHam}};
\draw[double equal sign distance, -implies, shorten >=5pt, shorten <=5pt] (bivCayHam) -- (detinpoly) node [midway,fill=white,inner sep=1pt] {\footnotesize Pro.~\ref{pro:wdetpoly}};
\draw[double equal sign distance, -implies, shorten >=15pt, shorten <=10pt, dashed]
  ( $(GirardNewton) + (-1.5cm,0)$ ) -- ( $(traceCayHam) + (-1.5cm,0)$ )
  node [pos=0.43,fill=white,inner sep=1pt] {\footnotesize \S\ref{subsec:algclosed}};
\end{tikzpicture}
\caption{Direct implications between the theorems.
The dashed implication only works over algebraically closed fields: one gets the matrix Girard--Newton identities from the classical Girard--Newton identities, and from there one derives $\w(\cpc_{n,d})\in\poly(n,d)$, but only over fields of characteristic zero.
To obtain $\w(\cpc_{n,d})\in\poly(n,d)$ over arbitrary commutative rings, one either uses our bivariate Cayley--Hamilton theorem, or one has to use an entirely separate construction such as \cite{Tod92,Val92,MV97}, see~\S\ref{sec:relwork}.
}
\label{fig:implicationstheorems}
\end{figure}
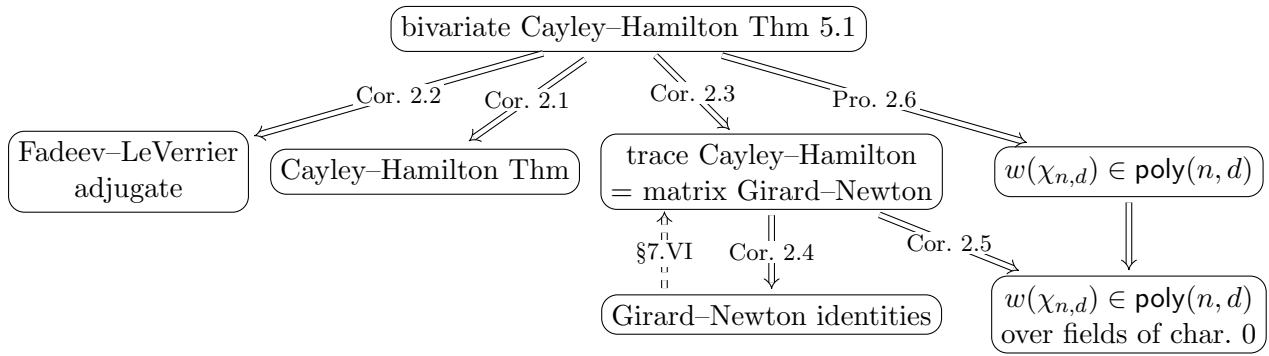

We study $(\nabla\cpc_{n,d})^\transpose$ more closely in \S\ref{sec:optimized}.
As our second main result,
we construct an ABP for computing all $\cpc_{n,d}$.
All intermediate computations results of our ABP are entries of $(\nabla\cpc_{n,d})^\transpose$.
Our construction is one third the size and width of the current smallest ABP \cite{MV97}, see \S\ref{sec:optimized}.
This is a modest improvement, but it marks the first improvement on this problem for 28 years.

Our ABP only ever computes the entries of $(\nabla\cpc_{n,d})^\transpose$ and nothing else,
and does not use any ad-hoc combinatorial constructions.
In particular, it does not involve the concept of clow sequences,
which are combinatorial gadgets that have no known nice algebraic interpretation, see \S\ref{subsec:clowsequences}.
Therefore, our construction answers the question raised in \S4 of \cite{MV99} about the necessity of clow sequences
for these types of algorithms
in the negative.
This answer to this 26-year-old open question is our third main result.

Some of our results are algebraic in nature, and our algorithmic results follow directly from algebraic identities.
We prove all results via the combinatorial approach to matrix algebra pioneered by \cite{Foa65,CF69,Str83,Zei85}, see many more references in \cite{Zei85,MV99}.
The construction in \cite{MV97} is combinatorial,
but not in exactly the same way as \cite{Str83,Zei85},
who give combinatorial proofs for algebraic identities between inherently algebraic objects,
while \cite{MV97} work with the combinatorial gadget of clow sequences.
Our proofs are in the spirit of \cite{Str83,Zei85}, since we
give combinatorial proofs of algebraic identities between objects from algebra.
In our case these are the first order partial derivatives of $\cpc_{n,d}$, which appear in the bivariate Cayley--Hamilton theorem.

\section{Corollaries}
\label{sec:corollaries}
In this section we prove the corollaries of Theorem~\ref{thm:CayleyHamilton} illustrated in Figure~\ref{fig:implicationstheorems}.
For a matrix $X$ we write $[X]_{i,j}$ for the entry in row $i$ and column $j$.

\begin{corollary}[The Cayley--Hamilton theorem]
\label{cor:CayHam}
\[
0 \ = \ \sum_{i=0}^n (-1)^{i} \, \cpc_{n,n-i} \, X_n^i.\qedhere
\]
\end{corollary}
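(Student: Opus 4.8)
The plan is to obtain the classical Cayley--Hamilton theorem as the special case $d=n$ of the bivariate Cayley--Hamilton theorem~\ref{thm:CayleyHamilton}. Substituting $d=n$ into that identity gives
\[
(\nabla \cpc_{n,n+1})^\transpose \ = \ \sum_{i=0}^n (-1)^{i} \, \cpc_{n,n-i} \, X_n^i ,
\]
so the corollary follows once the left-hand side is shown to be the zero matrix.

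To see that, I would observe that $\cpc_{n,n+1}=0$ as a polynomial. Indeed, by \eqref{eq:cpcS} we have $\cpc_{n,n+1} = \sum_{S\in\binom{\{1,\ldots,n\}}{n+1}}\det([X_n]_{S,S})$, and the index set $\binom{\{1,\ldots,n\}}{n+1}$ is empty, so this is the empty sum. (Equivalently, the polynomial in $t$ on the left of \eqref{eq:cpct} has $t$-degree exactly $n$, so the coefficients $\cpc_{n,n-d}$ vanish for $d>n$, i.e.\ $\cpc_{n,k}=0$ for $k>n$.) Since the gradient of the zero polynomial is the zero matrix, $(\nabla \cpc_{n,n+1})^\transpose$ is the zero matrix, and the displayed identity becomes exactly the asserted equation $0 = \sum_{i=0}^n (-1)^{i}\,\cpc_{n,n-i}\,X_n^i$.

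I do not expect any real obstacle: the full content of the corollary is already packaged inside Theorem~\ref{thm:CayleyHamilton}, and the only thing to notice is that the parameter choice $d=n$ makes the left-hand side vanish. In particular, the delicate point of the usual textbook proof — that one may not naively ``substitute $X_n$ into $\det(X_n+tI_n)$'' — never arises here, because Theorem~\ref{thm:CayleyHamilton} was established without any such substitution and already records the correct matrix identity for every value of $d$.
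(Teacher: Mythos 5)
Your proposal is correct and is exactly the paper's argument: set $d=n$ in Theorem~\ref{thm:CayleyHamilton} and note that $\cpc_{n,n+1}=0$ (empty sum in \eqref{eq:cpcS}), so the left-hand side is the zero matrix. The added remarks about why no illegitimate substitution of $X_n$ into the characteristic polynomial occurs are a nice clarification but not needed beyond the paper's one-line proof.
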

\begin{proof}
Set $n=d$ in Theorem~\ref{thm:CayleyHamilton}.
Since $\cpc_{n,d}=0$ for $d>n$, it follows that for $n=d$ the left-hand side of Theorem~\ref{thm:CayleyHamilton} vanishes,
which finishes the proof.
\end{proof}

The Fadeev--LeVerrier algorithm computes the adjugate matrix along the way.
We will not discuss this algorithm here, because it uses divisions and hence does not work over arbitrary commutative rings,
but the following formula is used to prove that it computes the adjugate, see \protect{\cite[IV(46)]{Gan59}}.
\begin{corollary}[The adjugate matrix in the Fadeev--LeVerrier algorithm]
\label{cor:adjugate}
\[
(\nabla \det_n)^\transpose \ = \ \sum_{i=0}^{n-1} (-1)^{i} \, \cpc_{n,n-1-i} \, X_n^i.\qedhere
\]
\end{corollary}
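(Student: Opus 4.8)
The plan is to read off this identity as the special case $d=n-1$ of the bivariate Cayley--Hamilton theorem~\ref{thm:CayleyHamilton}. First I would observe that $\cpc_{n,n}=\det_n$: by \eqref{eq:cpcS} the sum defining $\cpc_{n,n}$ ranges over $\binom{\{1,\ldots,n\}}{n}$, which has the single element $\{1,\ldots,n\}$, so $\cpc_{n,n}=\det([X_n]_{\{1,\ldots,n\},\{1,\ldots,n\}})=\det_n$ and hence $\nabla\cpc_{n,n}=\nabla\det_n$. Then I would substitute $d+1=n$, i.e.\ $d=n-1$, into
\[
(\nabla\cpc_{n,d+1})^\transpose \ = \ \sum_{i=0}^d (-1)^i\,\cpc_{n,d-i}\,X_n^i
\]
from Theorem~\ref{thm:CayleyHamilton}. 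The left-hand side becomes $(\nabla\cpc_{n,n})^\transpose=(\nabla\det_n)^\transpose$, and since $d-i=n-1-i$ and the summation range $0\le i\le d$ is $0\le i\le n-1$, the right-hand side is exactly $\sum_{i=0}^{n-1}(-1)^i\,\cpc_{n,n-1-i}\,X_n^i$, which is the asserted formula.

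There is essentially no obstacle here beyond this index bookkeeping: the entire content is already carried by Theorem~\ref{thm:CayleyHamilton}, and the only thing to be careful about is that shifting $d+1\mapsto n$ correctly turns $\cpc_{n,d-i}$ into $\cpc_{n,n-1-i}$ and trims the top term (which would be the $i=n$ term of Cayley--Hamilton) from the sum. As a consistency check I would verify that right-multiplying the asserted identity by $X_n$, re-indexing the sum by $j=i+1$, and invoking Corollary~\ref{cor:CayHam} yields $(\nabla\det_n)^\transpose X_n=\det_n\cdot I_n$, the defining property of the adjugate matrix $(\nabla\det_n)^\transpose$ recalled in the introduction. This both pins down the normalization of the formula and explains why it is precisely the identity used to justify the correctness of the Fadeev--LeVerrier algorithm.
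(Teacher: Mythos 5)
Your proposal is correct and matches the paper's proof exactly: the paper also obtains this corollary by setting $d=n-1$ in Theorem~\ref{thm:CayleyHamilton}, using $\cpc_{n,n}=\det_n$. Your extra consistency check via Corollary~\ref{cor:CayHam} is a nice sanity check but not needed.
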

\begin{proof}
We set $d=n-1$ in Theorem~\ref{thm:CayleyHamilton}.
%
%
\end{proof}

\begin{corollary}[The trace Cayley--Hamilton theorem, also known as the matrix Girard--Newton identities]
\label{cor:traceCayHam}
\[
-d \cdot \cpc_{n,d} \ = \ \sum_{i=1}^d (-1)^{i} \, \cpc_{n,d-i} \, \tr(X_n^i).
\qedhere
\]
\end{corollary}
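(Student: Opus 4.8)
The plan is to derive the trace Cayley--Hamilton theorem from our bivariate Cayley--Hamilton theorem \ref{thm:CayleyHamilton} by taking the trace of both sides. Starting from
\[
(\nabla \cpc_{n,d+1})^\transpose \ = \ \sum_{i=0}^d (-1)^{i} \, \cpc_{n,d-i} \, X_n^i,
\]
I would apply $\tr(\cdot)$ to each side. On the right-hand side, $\tr$ is $R[\mathbf x]$-linear, so $\tr\big(\sum_{i=0}^d (-1)^i \cpc_{n,d-i} X_n^i\big) = \sum_{i=0}^d (-1)^i \cpc_{n,d-i} \tr(X_n^i)$. The $i=0$ term is $\cpc_{n,d}\cdot\tr(I_n) = n\cdot \cpc_{n,d}$, so after separating it off the right side becomes $n\cdot \cpc_{n,d} + \sum_{i=1}^d (-1)^i \cpc_{n,d-i} \tr(X_n^i)$.

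The crux is then to compute the trace of the left-hand side, i.e., to show that $\tr\big((\nabla \cpc_{n,d+1})^\transpose\big) = \sum_{i} \partial_{i,i}(\cpc_{n,d+1}) = (n-d)\cdot\cpc_{n,d}$ — equivalently, $\sum_{i,j}x_{i,j}\partial_{i,j}(\cpc_{n,d+1})$-type identities reduced to the diagonal. Here the key fact is that $\cpc_{n,d+1}$ is a homogeneous polynomial of degree $d+1$ in the $n^2$ variables, but more precisely one should use its structure from \eqref{eq:cpcS}: $\cpc_{n,d+1} = \sum_{S\in\binom{[n]}{d+1}}\det([X_n]_{S,S})$. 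For a fixed $S$ of size $d+1$, $\det([X_n]_{S,S})$ involves only the variables $x_{i,j}$ with $i,j\in S$, and taking $\sum_{i\in S}\partial_{i,i}\det([X_n]_{S,S})$ gives $d+1$ copies of the "trace-of-cofactor" quantity; by the standard fact $\tr\big((\nabla\det_m)^\transpose\big) = \tr(\mathrm{adj}) = $ (sum of principal $(m{-}1)\times(m{-}1)$ minors) $= (m-1)$-th coefficient, applied with $m = d+1$, one gets $\tr\big((\nabla\det([X_n]_{S,S}))^\transpose\big) = \sum_{T\in\binom{S}{d}}\det([X_n]_{T,T})$. Summing over all $S\in\binom{[n]}{d+1}$ and noting each $T\in\binom{[n]}{d}$ arises from exactly $n-d$ sets $S\supset T$, we obtain $\tr\big((\nabla\cpc_{n,d+1})^\transpose\big) = (n-d)\cdot \cpc_{n,d}$.

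Equating the two sides yields $(n-d)\cpc_{n,d} = n\cdot\cpc_{n,d} + \sum_{i=1}^d(-1)^i\cpc_{n,d-i}\tr(X_n^i)$, and subtracting $n\cdot\cpc_{n,d}$ from both sides gives exactly $-d\cdot\cpc_{n,d} = \sum_{i=1}^d(-1)^i\cpc_{n,d-i}\tr(X_n^i)$, as claimed. The main obstacle I anticipate is the bookkeeping in the left-hand side computation: one must either invoke the Euler-type identity relating $\tr\big((\nabla\cpc_{n,d+1})^\transpose\big)$ to $\cpc_{n,d}$ directly, or carefully carry out the principal-minor counting argument above; an alternative, cleaner route is simply to observe that $\cpc_{n,d+1}$ is homogeneous and extract the diagonal trace via the identity $\sum_i \partial_{i,i}\det([X_n]_{S,S}) = |S|$-weighted sum over size-$(|S|-1)$ principal minors, which is just Jacobi's formula specialized to scalar matrices. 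Either way the combinatorial count "each $d$-subset lies in $n-d$ many $(d{+}1)$-subsets" is the one place to be careful.
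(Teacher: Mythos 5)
Your proposal is correct and follows essentially the same route as the paper: take the trace of both sides of Theorem~\ref{thm:CayleyHamilton}, show $\tr\big((\nabla\cpc_{n,d+1})^\transpose\big)=(n-d)\,\cpc_{n,d}$ by the same double count (each $d$-subset extends to exactly $n-d$ many $(d+1)$-subsets, which is the paper's ``for every $S$ there are $n-d$ many $a\notin S$''), and subtract $n\,\cpc_{n,d}$. The brief aside about Euler-type identities is unnecessary, but the minor-counting argument you actually carry out is exactly the paper's.
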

\begin{proof}
Note that by \eqref{eq:cpcS} we have
\[
\partial_{a,a}(\cpc_{n,d+1}) \ = \ \sum_{a \notin S\in\binom{\{1,\ldots,n\}}{d}} \det[X_n]_{S,S}.
\]
Since for every $S\in\binom{\{1,\ldots,n\}}{d}$ there are $n-d$ many $a\in\{1,\ldots,n\}\setminus S$, it follows that
$\tr(\nabla \cpc_{n,d+1}) = (n-d) \cpc_{n,d}$.
Hence, taking the trace in Theorem~\ref{thm:CayleyHamilton} on both sides gives
$(n-d) \cpc_{n,d} = \sum_{i=0}^d (-1)^{i} \cpc_{n,d-i} \tr(X_n^i)$.
We conclude the proof by subtracting $n \, \cpc_{n,d}$ on both sides of the equation.
\end{proof}
%
%
%
\begin{corollary}[The Girard--Newton identities]
\label{cor:girardnewton}
Let $e_{n,d} = \cpc_{n,d}(\diag(x_1,\ldots,x_n))$ be the elementary symmetric polynomial ($e_{n,0}=1$).
Let $p_{n,d} := x_1^d+\cdots+x_n^d$ be the power sum polynomial ($p_{n,0}=n$).
We have
\[
-d \cdot e_{n,d} \ = \ \sum_{i=1}^d (-1)^{i} \, e_{n,d-i} \, p_{n,i}.\qedhere
\]
\end{corollary}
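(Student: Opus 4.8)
The plan is to deduce this purely by specializing the trace Cayley--Hamilton theorem (Corollary~\ref{cor:traceCayHam}) to a diagonal matrix. Concretely, consider the $R$-algebra homomorphism $\varphi : R[\mathbf x] \to R[x_1,\ldots,x_n]$ determined by $\varphi(x_{i,i}) = x_i$ for all $i$ and $\varphi(x_{i,j}) = 0$ for $i \neq j$. Since Corollary~\ref{cor:traceCayHam} asserts an identity in the polynomial ring $R[\mathbf x]$, applying $\varphi$ to both sides yields a valid identity in $R[x_1,\ldots,x_n]$, and this works over every commutative ring $R$ with no change in argument.

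It then remains to identify the images of the two kinds of terms that appear. The homomorphism $\varphi$ sends the matrix $X_n$ to the diagonal matrix $D := \diag(x_1,\ldots,x_n)$. Because a ring homomorphism commutes with finite sums and products, it commutes with matrix multiplication and with taking traces; hence $\varphi\big(\tr(X_n^i)\big) = \tr(D^i) = \tr\big(\diag(x_1^i,\ldots,x_n^i)\big) = x_1^i + \cdots + x_n^i = p_{n,i}$. On the other hand, by the very definition of $e_{n,d}$ given in the statement, $\varphi(\cpc_{n,d}) = \cpc_{n,d}(\diag(x_1,\ldots,x_n)) = e_{n,d}$ for every $d$.

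Substituting these identifications into the identity $-d \cdot \cpc_{n,d} = \sum_{i=1}^d (-1)^{i} \, \cpc_{n,d-i} \, \tr(X_n^i)$ from Corollary~\ref{cor:traceCayHam} produces exactly $-d \cdot e_{n,d} = \sum_{i=1}^d (-1)^{i} \, e_{n,d-i} \, p_{n,i}$, which is the claim. There is no genuine obstacle here: the one point worth noticing is simply that polynomial identities are preserved under the algebra homomorphism $\varphi$, so the classical combinatorial identity falls out for free over arbitrary commutative rings from the matrix-valued statement of Corollary~\ref{cor:traceCayHam}.
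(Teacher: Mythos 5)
Your proposal is correct and is essentially identical to the paper's proof, which likewise restricts Corollary~\ref{cor:traceCayHam} to $\diag(x_1,\ldots,x_n)$ and uses $\cpc_{n,d}(D)=e_{n,d}$ and $\tr(D^i)=p_{n,i}$. The explicit description of the substitution as an $R$-algebra homomorphism is just a more careful spelling-out of the same one-line argument.
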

\begin{proof}
We restrict Corollary~\ref{cor:traceCayHam} to diagonal matrices $D=\diag(x_1,\ldots,x_n)$ and immediately obtain the desired statement, because $\cpc_{n,d}(D)=e_{n,d}$ and $\tr(D^d)=p_{n,d}$.
\end{proof}

Recall from \S\ref{sec:intro} that an ABP $G$
computes
$\sum_{\textup{$s$-$t$-path $p$ in $G$}} \beta(p)$,
where $\beta(p)=\prod_{e\in p}\beta(e)$,
and $\beta(e)$ is the edge label of the edge $e$.
A sub-ABP $G_v$ of $G$ at a vertex $v$ is defined as the ABP that consists of
all edges on all paths from $s$ to $v$,
where $v$ is the sink of $G_v$, and $s$ is the source of~$G_v$.
A polynomial $f$ is defined to be computed by an ABP $G$ \emph{along the way} if
there exists a vertex $v$ in $G$ such that $G_v$ computes~$f$.
We also say that $G$ computes $f$ at~$v$.
Sub-ABPs can also have sources other than $s$, and we see those in the constructions of
Corollary~\ref{cor:wdetpolycharzero} and Proposition~\ref{pro:wdetpoly}.

\begin{corollary}
\label{cor:wdetpolycharzero}
Over fields of characteristic zero, $\w(\cpc_{n,d})\in\poly(n,d)$ follows directly from the trace Cayley--Hamilton theorem (Corollary~\ref{cor:traceCayHam}).
\end{corollary}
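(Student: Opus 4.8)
The plan is to read Corollary~\ref{cor:traceCayHam} as a recursion and to implement that recursion by a single algebraic branching program. Since the characteristic is $0$, the positive integers $1,\dots,d$ are invertible in the ground field, so Corollary~\ref{cor:traceCayHam} may be rewritten as the Faddeev--LeVerrier recursion
\[
\cpc_{n,d} \;=\; \tfrac1d\sum_{i=1}^d (-1)^{i+1}\,\cpc_{n,d-i}\,\tr(X_n^i),\qquad \cpc_{n,0}=1.
\]
A warning is in order: feeding this recursion straight into the subadditivity and submultiplicativity of $\w$ recorded in \eqref{eq:duality} only yields $\w(\cpc_{n,d})\le 2^{d-1}\poly(n)$, which is exponential in~$d$; the substance of the argument is to build \emph{one} ABP that reuses its intermediate results. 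I would first record the standard fact that
\[
\tr(X_n^i)\;=\;\sum_{v_1,\dots,v_i} x_{v_1,v_2}\,x_{v_2,v_3}\cdots x_{v_i,v_1}
\]
is the generating polynomial of the closed walks of length~$i$ in the complete graph on $n$ vertices, and hence has an ABP $T_i$ with $i+1$ layers and width at most $n^2$: the middle layers carry a pair (current vertex, starting vertex), the $i$ layer-increasing edges are labelled by successive matrix entries, and the last edge is forced to return to the starting vertex.

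Next I would construct, by induction on $d$, an ABP $G_d$ with layers $0,\dots,d$ together with a distinguished vertex $u_j$ in layer $j$ for each $0\le j\le d$, such that $G_d$ computes $\cpc_{n,j}$ at $u_j$ in the sense of the paragraph preceding this corollary (with $u_0$ the source of $G_d$). The base case $G_0$ is the single vertex $u_0$. For the inductive step I start from $G_{d-1}$ and, for each $i\in\{1,\dots,d\}$, glue in a fresh copy of $T_i$ whose $i+1$ layers have been shifted to occupy layers $d-i,\dots,d$, identify the source of that copy with the vertex $u_{d-i}$ of $G_{d-1}$, scale one of its edges by the field element $\tfrac{(-1)^{i+1}}{d}$, and finally merge the $d$ resulting sink vertices (all in layer $d$) into one new vertex $u_d$. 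Because an ABP has no layer-decreasing edges, and because these copies of $T_i$ introduce only new vertices in layers $d-i+1,\dots,d$ and no new edge inside a layer of $G_{d-1}$, every $s$-$u_j$ path of $G_d$ with $j<d$ still lies in $G_{d-1}$; hence the sub-ABP of $G_d$ at each old vertex $u_j$ is unchanged and computes $\cpc_{n,j}$ by induction. On the other hand, a glued copy of $T_i$ can only be entered at $u_{d-i}$ and left at $u_d$, so every $s$-$u_d$ path factors uniquely as a path from $s$ to some $u_{d-i}$ inside $G_{d-1}$ followed by a complete traversal of the $i$-th glued copy, and therefore $G_d$ computes at $u_d$ precisely
\[
\sum_{i=1}^d \cpc_{n,d-i}\cdot\tfrac{(-1)^{i+1}}{d}\,\tr(X_n^i)\;=\;\cpc_{n,d},
\]
by the recursion above, the divisions being legitimate because the characteristic is~$0$.

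Finally I would bound the width. In the step that builds $G_{d'}$, the shifted copy of $T_i$ contributes at most $n^2$ vertices to each of the layers $d'-i+1,\dots,d'-1$, and for a fixed layer $\ell<d'$ there are exactly $\ell$ indices $i$ whose copy reaches down to layer $\ell$; so that step adds at most $\ell\,n^2$ vertices to layer $\ell$. Summing over the steps $d'=\ell+1,\dots,d$ and adding the single vertex $u_\ell$ created in step $\ell$ shows that layer $\ell$ of $G_d$ has at most $1+d^2n^2$ vertices, so $\w(\cpc_{n,d})=\w(G_d)=O(d^2 n^2)\in\poly(n,d)$. I expect the only genuine obstacle to be the one already flagged -- the naive recursive estimate from \eqref{eq:duality} is exponential, so all the work is in organising the computation into a single shared ABP and checking that the gluing preserves the layered structure. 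Alternatively, one could package the recursion as the degree-$d$ truncation of the product $\prod_{i\ge 1}\exp\!\bigl(\tfrac{(-1)^{i+1}}{i}\tr(X_n^i)\bigr)$ of commuting polynomials -- essentially Csanky's algorithm -- and implement it as a single left-to-right chain of ``$\exp$-blocks'', with the same polynomial width.
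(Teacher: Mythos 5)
Your proposal is correct and follows essentially the same route as the paper: both read Corollary~\ref{cor:traceCayHam} as the recursion $\cpc_{n,d}=\sum_{i=1}^d \tfrac{(-1)^{i+1}}{d}\,\cpc_{n,d-i}\,\tr(X_n^i)$ and realize it as a single ABP with a distinguished vertex per degree computing $\cpc_{n,j}$ along the way, joined by width-$n^2$ closed-walk sub-ABPs for the $\tr(X_n^i)$ (the paper's Figure~\ref{fig:charpolycharzero}). Your per-layer width count and the inductive packaging are just more explicit versions of the paper's $\binom{n+1}{2}\cdot n^2$ bound, and both land in $\poly(n,d)$.
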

\begin{proof}
We construct an ABP without constant edges.
The ABP will compute $\det_n=\cpc_{n,n}$,
and it will also compute each $\cpc_{n,d}$, $0 \leq d\leq n$ along the way, at a vertex that we call $v_d$.
See Figure~\ref{fig:charpolycharzero} for an illustration.
Starting with the vertices $v_0, \ldots, v_n$,
we connect each $v_d$ with all $v_{d-i}$, $1\leq i \leq d\leq n$, by ABPs $G_{d-i,d}$ that compute $\frac{(-1)^{i}}{-d}\tr(X_n^{i})$
via identifying the source of $G_{d-i,d}$ with $v_{d-i}$ and the sink of $G_{d-i,d}$ with $v_{d}$.
Note that each connection maintains the property that at each vertex layer $j$ the ABP computes homogeneous degree $j$ polynomials.
The resulting program has $\binom{n+1}{2}$ such connections, each of width $\w(\tr(X_n^i))\leq n^2$.
By Corollary~\ref{cor:traceCayHam}, the ABP computes $\cpc_{n,d}$ at each vertex~$v_d$.
Hence, $\w(\cpc_{n,d})\leq \binom{n+1}{2}\cdot n^2$ over fields of characteristic~0.
\begin{figure}
\centering
\tikzset{rect style/.style={rotate=#1,scale=2,every node/.style={rotate=#1}}}
\begin{tikzpicture}[rect style=-45,xscale=1.3,yscale=1.3]
\node[draw,circle,inner sep=0,minimum size=0.5cm] (v0) at (0,0) {\rotatebox{45}{$v_0$}};
\node[draw,circle,inner sep=0,minimum size=0.5cm] (v1) at (1,1) {\rotatebox{45}{$v_1$}};
\node[draw,circle,inner sep=0,minimum size=0.5cm] (v2) at (2,2) {\rotatebox{45}{$v_2$}};
\node[draw,circle,inner sep=0,minimum size=0.5cm] (v3) at (3,3) {\rotatebox{45}{$v_3$}};
\node[draw,circle,inner sep=0,minimum size=0.5cm] (v4) at (4,4) {\rotatebox{45}{$v_4$}};
\draw[-Triangle] (v0) to node[midway,fill=white,inner sep=0] {\footnotesize$\tr(X_n)$} (v1);
\draw[-Triangle] (v1) to node[midway,fill=white,inner sep=0] {\footnotesize$\frac 1 2\tr(X_n)$} (v2);
\draw[-Triangle] (v2) to node[midway,fill=white,inner sep=0] {\footnotesize$\frac 1 3\tr(X_n)$} (v3);
\draw[-Triangle] (v3) to node[midway,fill=white,inner sep=0] {\footnotesize$\frac 1 4\tr(X_n)$} (v4);
\draw[-Triangle] (v0) to[bend left=45] node[pos=0.55,fill=white,inner sep=0,minimum size=0.5cm] {\footnotesize$ \ \ \ -\frac 1 2\tr(X_n^2)$} (v2);
\draw[-Triangle] (v1) to[bend left=45] node[pos=0.55,fill=white,inner sep=0,minimum size=0.5cm] {\footnotesize$ \ \ \ \ -\frac 1 3\tr(X_n^2)$} (v3);
\draw[-Triangle] (v2) to[bend left=45] node[midway,fill=white,inner sep=0,minimum size=0.5cm] {\footnotesize$-\frac 1 4\tr(X_n^2)$} (v4);
\draw[-Triangle] (v0) to[bend left=45] node[midway,fill=white,inner sep=0,minimum size=0.5cm] {\footnotesize$ \ \ \frac 1 3\tr(X_n^3)$} (v3);
\draw[-Triangle] (v1) to[bend left=45] node[midway,fill=white,inner sep=0,minimum size=0.5cm] {\footnotesize$\frac 1 4\tr(X_n^3)$} (v4);
\draw[-Triangle] (v0) to[bend left=45] node[midway,fill=white,inner sep=0,minimum size=0.5cm] {\footnotesize$\!\!-\frac 1 4\tr(X_n^4)$} (v4);
\end{tikzpicture}
\caption{Over fields of characteristic 0. An ABP computing along the way all $\cpc_{n,d}$ for $d\leq 4$, where any $n\in\IN$ is fixed. Each edge represents a sub-ABP whose source is some $v_i$ and whose sink is some $v_j$, $i<j$.}
\label{fig:charpolycharzero}
\end{figure}
\end{proof}

In order to get the same result over arbitrary commutative rings,
we use Theorem~\ref{thm:CayleyHamilton}
instead of Corollary~\ref{cor:traceCayHam}
and make a bivariate argument, see the following proposition.
\begin{proposition}\label{pro:wdetpoly}
Over arbitrary commutative rings, we have
$\w(\cpc_{n,d}) \in\poly(n,d)$.
\end{proposition}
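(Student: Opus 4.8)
The plan is to distill from Theorem~\ref{thm:CayleyHamilton} a \emph{division-free} recursion for the polynomials $\cpc_{n,d}$ and then realise that recursion inside one homogeneous ABP, in the same spirit as Corollary~\ref{cor:wdetpolycharzero} but never dividing by~$d$, which is exactly the step that breaks down over a general commutative ring. I would read off the $(n,n)$-entry of the matrix identity in Theorem~\ref{thm:CayleyHamilton}; since $[X_n^0]_{n,n}=1$ this says
\[
\partial_{n,n}(\cpc_{n,d+1}) \;=\; \cpc_{n,d} \;+\; \sum_{i=1}^{d} (-1)^{i}\, \cpc_{n,d-i}\, [X_n^i]_{n,n}.
\]
On the other hand, the combinatorial formula for this partial derivative used in the proof of Corollary~\ref{cor:traceCayHam} gives $\partial_{n,n}(\cpc_{n,d+1}) = \sum_{S\in\binom{\{1,\ldots,n-1\}}{d}}\det[X_n]_{S,S}$, and since every such $S$ avoids the index~$n$ we may evaluate it on the top-left $(n-1)\times(n-1)$ block; by \eqref{eq:cpcS} it therefore equals $\cpc_{n-1,d}$. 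Rearranging produces the recursion
\[
\cpc_{n,d} \;=\; \cpc_{n-1,d} \;+\; \sum_{i=1}^{d} (-1)^{i-1}\, [X_n^i]_{n,n}\, \cpc_{n,d-i},
\]
which is valid over an arbitrary commutative ring, with base values $\cpc_{m,0}=1$ and $\cpc_{0,d}=0$ for $d\ge 1$.

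With the recursion in hand I would build a single homogeneous ABP computing $\cpc_{m,j}$ simultaneously for all $0\le m\le n$ and all $0\le j\le d$; building one shared ABP, rather than applying the recursion top-down to separate ABPs, is what keeps the size polynomial. For each pair $(m,j)$ introduce a vertex $v_{m,j}$ in layer~$j$, meant to compute $\cpc_{m,j}$ in the variables of the $m\times m$ top-left block. Add a constant edge labelled~$1$ from $v_{m-1,j}$ to $v_{m,j}$, realising the summand $\cpc_{m-1,j}$; and for each $1\le i\le j$ glue in a private copy of the standard ABP of width $\le m$ computing $[X_m^i]_{m,m}$, the $(m,m)$-entry of the $i$-th power of the top-left block, running from $v_{m,j-i}$ up to $v_{m,j}$ with one edge label scaled by $(-1)^{i-1}$ to carry the sign. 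Take $v_{0,0}$ as the source and $v_{n,d}$ as the sink. A routine induction on $j$, and within it on $m$, shows that $v_{m,j}$ computes $\cpc_{m,j}$: the weighted sum of source-to-$v_{m,j}$ paths factors as the value at $v_{m-1,j}$ plus $\sum_{i}(-1)^{i-1}[X_m^i]_{m,m}$ times the value at $v_{m,j-i}$, which is the right-hand side of the recursion, and homogeneity is automatic because $v_{m,j}$ sits in layer~$j$ while $\cpc_{m,j}\in R[\mathbf x]_j$.

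It remains to estimate the width. In a fixed intermediate layer~$\ell$ the vertices are the at most $n+1$ vertices $v_{0,\ell},\ldots,v_{n,\ell}$ together with the internal vertices of those glued power-sub-ABPs that straddle layer~$\ell$; such a sub-ABP is indexed by a triple $(m,i,j)$ with $j-i<\ell<j$, of which there are $O(nd^2)$, and each contributes at most $n$ internal vertices to layer~$\ell$. Hence the ABP has width $O(n^2d^2)$ --- one may freely assume $d\le n$ since $\cpc_{n,d}=0$ otherwise --- so by Proposition~\ref{pro:wwh} we conclude $\w(\cpc_{n,d})\in\poly(n,d)$.

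The one genuinely new point, and where I expect the difficulty to concentrate, is the first step. Over an arbitrary commutative ring neither Euler's identity nor the trace Cayley--Hamilton recursion (Corollary~\ref{cor:traceCayHam}) can isolate $\cpc_{n,d}$, since each produces it only together with a spurious factor of~$d$. What makes the argument work is that Theorem~\ref{thm:CayleyHamilton} hands us the \emph{whole} gradient matrix and not merely its trace, and its diagonal entries, being honest characteristic-polynomial coefficients for one smaller value of~$n$, are precisely what lets one solve for $\cpc_{n,d}$ with no division; everything after that is the bookkeeping of Corollary~\ref{cor:wdetpolycharzero} carried out bivariately, recursing in $n$ as well as in~$d$.
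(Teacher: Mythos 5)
Your proposal is correct and is essentially the paper's own proof: you extract the $(n,n)$-entry of Theorem~\ref{thm:CayleyHamilton}, identify $[\nabla\cpc_{n,d+1}]_{n,n}$ with $\cpc_{n-1,d}$ via \eqref{eq:cpcS}, obtain the same division-free recursion \eqref{eq:cpcrecurse}, and realise it in a single shared bivariate ABP of width $O(n^2d^2)$ using the width-$\leq m$ sub-ABPs for $[X_m^i]_{m,m}$. The only differences are cosmetic (the paper omits the constant edge into $v_{m,j}$ when $m\le j$ rather than letting those vertices compute~$0$, and it orders the construction explicitly over pairs $(m,j)$).
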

\begin{proof}
Let $\pow_{n,i} := [X_n^i]_{n,n}$ be the bottom-right entry of the $i$-th matrix power of $X_n$, and note $\w(\pow_{n,i}) \leq n$.
From \eqref{eq:cpcS} it follows that $[\nabla \cpc_{n,d+1}]_{n,n} = \cpc_{n-1,d}$.
We take the bottom-right matrix entry (i.e., position $(n,n)$) on both sides of Theorem~\ref{thm:CayleyHamilton}:
\begin{equation}
\label{eq:Samuelson}
\cpc_{n-1,d} \ = \ \sum_{i=0}^d (-1)^{i} \, \cpc_{n,d-i} \, \pow_{n,i}.
\end{equation}
We use that $\pow_{n,0}=1$ to rewrite equation~\eqref{eq:Samuelson} in the form
\begin{equation}
\label{eq:cpcrecurse}
\cpc_{n,d} \ = \ \cpc_{n-1,d} + \sum_{i=1}^{d} (-1)^{i+1} \, \cpc_{n,d-i} \, \pow_{n,i}.
\end{equation}
%
Equation \eqref{eq:cpcrecurse} gives a recipe to construct an ABP,
illustrated in Figure~\ref{fig:charpolyprogram}.
Our ABP will compute $\cpc_{n,d}$ at vertex $v_{n,d}$.
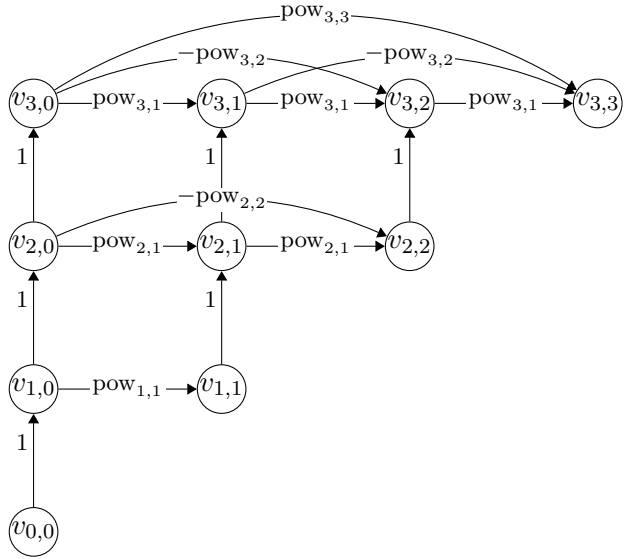
\begin{figure}
\centering
\begin{tikzpicture}[xscale=2.5,yscale=1.9]
\node[fill=white,draw,circle,inner sep=0,minimum size=0.6cm] (v40) at (0,-1) {$v_{0,0}$};
\node[fill=white,draw,circle,inner sep=0,minimum size=0.6cm] (v30) at (0,0) {$v_{1,0}$};
\draw[-Triangle] (v40) to node[left,pos=0.7,inner sep=0,minimum size=0.5cm] {\footnotesize$\ \ 1$} (v30);
\node[fill=white,draw,circle,inner sep=0,minimum size=0.6cm] (v20) at (0,1) {$v_{2,0}$};
\node[fill=white,draw,circle,inner sep=0,minimum size=0.6cm] (v10) at (0,2) {$v_{3,0}$};
\draw[-Triangle] (v30) to node[left,pos=0.7,inner sep=0,minimum size=0.5cm] {\footnotesize$\ \ 1$} (v20);
\draw[-Triangle] (v20) to node[left,pos=0.7,inner sep=0,minimum size=0.5cm] {\footnotesize$\ \ 1$} (v10);
\node[fill=white,draw,circle,inner sep=0,minimum size=0.5cm] (v11) at (1,0) {$v_{1,1}$};
\draw[-Triangle] (v30) to node[midway,fill=white,inner sep=1pt] {\footnotesize$\pow_{1,1}$} (v11);
\node[fill=white,draw,circle,inner sep=0,minimum size=0.5cm] (v21) at (1,1) {$v_{2,1}$};
\draw[-Triangle] (v20) to node[pos=0.5,fill=white,inner sep=0] {\footnotesize$\pow_{2,1}$} (v21);
\draw[-Triangle] (v11) to node[left,pos=0.7,inner sep=0,minimum size=0.5cm] {\footnotesize$\ \ 1$} (v21);
\node[fill=white,draw,circle,inner sep=0,minimum size=0.5cm] (v31) at (1,2) {$v_{3,1}$};
\draw[-Triangle] (v10) to node[pos=0.5,fill=white,inner sep=0] {\footnotesize$\pow_{3,1}$} (v31);
\draw[-Triangle] (v21) to node[left,pos=0.7,inner sep=0,minimum size=0.5cm] {\footnotesize$\ \ 1$} (v31);
\node[fill=white,draw,circle,inner sep=0,minimum size=0.6cm] (v22) at (2,1) {$v_{2,2}$};
\draw[-Triangle] (v20) to[bend left] node[pos=0.5,fill=white,inner sep=0] {\footnotesize$-\pow_{2,2}$} (v22);
\draw[-Triangle] (v21) to node[pos=0.5,fill=white,inner sep=1pt] {\footnotesize$\pow_{2,1}$} (v22);
\node[fill=white,draw,circle,inner sep=0,minimum size=0.6cm] (v32) at (2,2) {$v_{3,2}$};
\draw[-Triangle] (v31) to node[pos=0.5,fill=white,inner sep=1pt] {\footnotesize$\pow_{3,1}$} (v32);
\node[fill=white,draw,circle,inner sep=0,minimum size=0.6cm] (v33) at (3,2) {$v_{3,3}$};
\draw[-Triangle] (v32) to node[pos=0.5,fill=white,inner sep=1] {\footnotesize$\pow_{3,1}$} (v33);
\draw[-Triangle] (v22) to node[left,pos=0.7,inner sep=0,minimum size=0.5cm] {\footnotesize$\ \ 1$} (v32);
\draw[-Triangle] (v10) to[bend left,looseness=1] node[pos=0.5,fill=white,inner sep=0] {\footnotesize$-\pow_{3,2}$} (v32);
\draw[-Triangle] (v31) to[bend left,looseness=1] node[pos=0.5,fill=white,inner sep=0] {\footnotesize$-\pow_{3,2}$} (v33);
\draw[-Triangle] (v10) to[bend left=40] node[pos=0.5,fill=white,inner sep=0] {\footnotesize$\pow_{3,3}$} (v33);
\end{tikzpicture}
\caption{An ABP computing along the way all $\cpc_{n,d}$ for $d\leq n\leq 3$.
Each non-constant edge represents a sub-ABP.
%
}
\label{fig:charpolyprogram}
\end{figure}
We start with a sequence of vertices $v_{i,0}$, $0\leq i\leq n$,
with edges labeled with 1 from each $v_{i,0}$ to $v_{i+1,0}$.
The source is $v_{0,0}$.
Equation~\eqref{eq:cpcrecurse} allows us to
take a program that computes $\cpc_{n-1,d}$ and each $\cpc_{n,d-i}$, $i\in\{1,\ldots,d\}$,
and enlarge it to a program that also computes $\cpc_{n,d}$ at a vertex that we call $v_{n,d}$, as follows.
We add an edge with label 1 from $v_{n-1,d}$ to $v_{n,d}$,
provided $n>d$ (otherwise, $\cpc_{n-1,d}=0$).
Then, for each $i\in\{1,\ldots,d\}$ we connect $v_{n,d-i}$ and $v_{n,d}$ via an ABP that computes $(-1)^{i+1}\,\pow_{n,i}$, identifying sources and sinks as in the proof of Corollary~\ref{cor:wdetpolycharzero}.
We enlarge the program via this construction several times
in the following order of pairs $(n,d)$ that guarantees that all required vertices are present at each step of the construction:
$(1,1), (2,1), \ldots, (n,1), (1,2), (2,2), \ldots, (n,2), (3,1), \ldots, \ldots, (n,d)$.
The final ABP consists of $O(nd)$ many named vertices $v_{i,j}$.
Every $v_{i,j}$ has exactly $j$ many $\pow_{i,j'}$, $1\leq j' \leq j$, ending at it,
which gives a total of $O(n d^2)$ many $\pow_{i,j'}$.
We have $\w(\pow_{i,j'})\leq n$ in every case, which results in $\w(\cpc_{n,d}) \in O(n^2 d^2)$.
\end{proof}
%
%
Figure~\ref{fig:charpolyprogram} is the first time
over arbitrary commutative rings
that a polynomially sized ABP for $\det_d$ can be easily depicted.
Its width is $O(n^2 d^2)$, which is surprisingly small given the fact that it is a direct application of Theorem~\ref{thm:CayleyHamilton},
which is first and foremost a theorem in linear algebra, not in algorithms.
The algorithmic and algebraic questions appear to be deeply interlinked here.

\section{Related work}\label{sec:relwork}

\paragraph{Girard--Newton, Cayley--Hamilton, and combinatorial proofs in algebra}
The Girard--Newton identities have numerous applications in constructions in algebraic complexity theory.
For example, recently the Girard--Newton identities have been applied in two breakthrough results in algebraic complexity theory \cite{LST25, AW24},
and they are the center of investigation in recent papers, both in
algebraic complexity theory \cite{FLST24} and algebra \cite{dVr25}.

The Cayley--Hamilton theorem is a fundamental theorem in linear algebra with countless applications and generalizations.
\cite{SC04} studies the Cayley--Hamilton theorem from a proof complexity perspective.
In algebraic complexity theory the Cayley--Hamilton theorem can be used to explain the symmetries of Strassen's fast matrix multiplication algorithm \cite{IL19}.
\cite{Gri25} first discusses the long history of the trace Cayley--Hamilton theorem,
the different proof difficulties for the cases $d\geq n$ and $d<n$, and the available proofs over fields.
\cite{Gri25} then states a lemma (Lemma 3.11(c) therein, already implicit in \cite[\S3]{Buc1884}) over arbitrary commutative rings that he then applies in the proofs of both the trace Cayley--Hamilton theorem and the Cayley--Hamilton theorem.
The Cayley--Hamilton theorem follows directly from this lemma, but more work is required to reach the trace Cayley--Hamilton theorem. Before our paper, \cite{Gri25} was the most unified view on the subject.
The formula on the right-hand side of Theorem~\ref{thm:CayleyHamilton} appears in earlier works, see for example \cite[IV(46)]{Gan59} or \cite[Lemma~3.11(c)]{Gri25}, but the central importance of the matrix on the left-hand side was overlooked
and it was left underanalysed with only some of its properties proved, for example its trace.
The matrix on the left-hand side was only known in the special case of $d=n$ (Corollary~\ref{cor:CayHam}) and $d=n-1$ (Corollary~\ref{cor:adjugate}).

Subtleties in proving the Girard--Newton identities are explained in \cite{Min03}.
A short combinatorial proof of the Girard--Newton identities is presented in \cite{Zei84},
and a combinatorial proof of the Cayley--Hamilton theorem is presented in \cite{Rut64,Str83}.
Numerous other algebraic results have been proved combinatorially,
for example the MacMahon master theorem \cite{Foa65},
the matrix-tree theorem \cite{Orl78,Tem81,Cha82,Min97},
the Jacobi identity \cite{Foa80},
Vandermonde's determinant identity \cite{Ges79}, and others.
These results are listed in \cite{MV99} and surveyed in \cite{Zei85}, \cite[Ch.~4]{Sta12}.

\paragraph{ABPs, GapL, and clow sequences}
There are many ways for computing $\det_d$ efficiently,
but not all of them directly give a proof that $\w(\det_d)\in\poly(d)$
over arbitrary commutative rings.
\cite{Coo85} noted that the Samuelson--Berkowitz algorithm \cite{Sam42,Ber84}
can be used to obtain $\det_d$ efficiently from matrix powers,
but this does not directly prove $\w(\det_d)\in\poly(d)$.
It is stated in \cite{Tod91} that this observation in \cite{Coo85}
can be converted into a proof that $\w(\det_d)\in\poly(d)$.
A proof is provided in \cite{Tod92}.
This is also proved in \cite[\S3]{Val92}.
These proofs also directly give $\w(\cpc_{n,d})\in\poly(n,d)$.
Another proof is provided in \cite{MV97,MV97SODA},
which uses a combinatorial construction and gives the smallest ABP known so far.
All these constructions have natural Boolean counterparts,
and they can be used to show that the integer determinant computation is complete for the counting complexity class $\GapL$, see \cite{AO96}.
This is a surprising fact, because $\GapL$ is not defined in an algebraic way.
The class $\GapL$ consists of all functions $f:\{0,1\}^*\to\IZ$ for which there exist two nondeterministic logspace Turing machines $M_+$ and $M_-$ such that $\forall w\in\{0,1\}^*$ we have $f(w)=\#\textup{acc}_{M_+}(w)-\#\textup{acc}_{M_-}(w)$, where
$\#\textup{acc}_M(w)$ is the number of accepting computation paths of $M$ on input $w$.

See \cite[\S2.8]{Rot01} for a discussion of the history and attribution of the Samuelson--Berkowitz algorithm.

\cite{MV97} prove $\w(\cpc_{n,d})\in\poly(n,d)$ via the cancellation properties of so-called \emph{clow sequences}, see \S\ref{subsec:clowsequences} below.
Clow sequences and their cancellations have also been successfully used in \cite{ARZ99}.
\cite{Sol02} finds that the Samuelson--Berkowitz algorithm can be described naturally in terms of clow sequences.
\cite{MV99} analyze several algorithms \cite{Sam42,Csa75,Chi85,MV97} and their internal use of clow sequences
and come to the conclusion
that all published efficient algorithms for $\cpc_{n,d}$
that are based on signed weighted sums of partial cycle covers
use the concept of clow sequences or a generalization of clow sequences.
The ABP that we construct in Theorem~\ref{thm:wdetnsquare}
computes only first order partial derivatives of partial cycle covers, and does not involve the more general concept of clow sequences (partial cycle covers are clow sequences, but not vice versa).
Hence, our construction answers the question raised in \cite[\S4]{MV99} about the necessity of the use of clow sequences of such algorithms in the negative, see also \S\ref{subsec:clowsequences}.

\paragraph{Structure, homogenization, and geometric complexity theory} The width $\w(f)$ can be defined as the smallest $n$ such that
$f$ is restriction of the iterated matrix multiplication polynomial
\begin{equation*}
\IMM_{n,d} \ = \ \sum_{1 \leq i_1,\ldots,i_{d-1} \leq n} y^{(1)}_{n,i_1} \, y^{(2)}_{i_1,i_2} \, y^{(3)}_{i_2,i_3} \, \cdots \, y^{(d-1)}_{i_{d-2},i_{d-1}} \, y^{(d)}_{i_{d-1},n},
\end{equation*}
where a restriction is defined as a map that sends variables to linear combinations of variables.
The polynomial $\IMM_{n,d}$ has been recently the object of study in the breakthrough result \cite{LST25}.
The high amount of structure of $\IMM_{n,d}$ makes it not only possible to efficiently sample random polynomials of fixed width and degree, but also to analyze their probabilistic properties and obtain algorithms for solving systems of random polynomial equations of given width \cite{BCL23},
which is a solution to Smale's 17th problem.

Early papers in geometric complexity theory (GCT) use the determinantal complexity $\dc$ as their main complexity measure, and \eqref{eq:wdc} indicates that width and $\dc$ should work equally well from a computational complexity perspective, but since the determinant is a single-index polynomial, it is necessary to either discard the idea that (border) complexity $\leq n$ polynomials form a $\GL$ orbit closure, or one must introduce \emph{padding}, see \cite{MS01,MS08,BLMW11}.
However, padding has serious effects on the representation theoretic multiplicities \cite{KL14},
which led to main GCT conjectures being disproved in \cite{IP17,BIP19}. These proofs rely heavily on the padding, and if the width is used instead of $\dc$, then no such negative results are known.
\cite{DGIJL24} introduce the noncommutative elementary symmetric polynomial evaluated at $3\times 3$ variable matrices as a simpler replacement for $\IMM_{n,d}$, which makes the study of small finite cases easier.
This polynomial is also double-indexed, but it is not characterized by its stabilizer. That polynomial was recently used in \cite{FLST24} to study the homogenization of arithmetic formulas.

\section{Preliminaries}\label{sec:prelim}
For a function $\zeta:\IN\times\IN\to\IN$ we write $\zeta\in\poly(n,d)$ if there exists a bivariate polynomial $p\in\IZ[n,d]$ such that for all $n\geq 1$ and $d\geq 1$ we have $\zeta(n,d)\leq p(n,d)$.

All graphs in this paper are directed graphs, or \emph{digraphs} for short, without multi-edges, but we allow loops (i.e., edges from a vertex to itself).
Formally,
a \emph{digraph} $G$ consists of a \emph{vertex set} $V(G)$ and an \emph{edge set} $E(G)\subseteq V(G)\times V(G)$.
A \emph{walk} $q$ of \emph{length} $l\geq 0$ in $G$ is a sequence of vertices $(v_0, v_1, \ldots, v_l)$
for which $\forall i\in\{1,\ldots,l\}: (v_{i-1},v_{i}) \in E(G)$.
The edge set of a walk is defined as $\{(v_{i-1},v_{i}) \mid 1 \leq i \leq l\}$.
The sign $\sgn(q)$ is defined as $(-1)^{l}$.
For $s,t\in V(G)$ let $W_l(G,s,t)$ denote the set of length $l$ walks in $G$ from vertex $s=v_0$ to vertex~$t=v_l$.
A \emph{path} $p$ is a walk in which all $v_i$ are pairwise distinct.
In particular, a path never uses a loop.
A path can consist of a single vertex.
The vertex $v_0$ is called the tail of $p$, and $v_l$ is called the head of $p$.
Let $P(G,s,t)$ denote the set of paths in $G$ from vertex $s$ to vertex $t$.
Let $P_l(G,s,t)\subseteq P(G,s,t)$ denote the subset of length $l$ paths.
A \emph{cycle} is a set of edges that is obtained from the edge set of a path $p$ by adding the edge from the head of $p$ to its tail.
In particular, a loop is a cycle.
The length of a cycle is defined as its number of edges.
The sign of a cycle $c$ is defined as the sign of a path that is obtained by removing an edge from~$c$.
We write $(\!(v_0,v_1,\ldots,v_l)\!)$ for the cycle that has the edge set $\{(v_0,v_1),(v_1,v_2),\ldots,(v_{l-1},v_l),(v_l,v_0)\}$, and we write $(\!(v_0)\!)$ for the cycle formed by the loop at vertex~$v_0$.
For a walk $q$ in $G$ and $e \in E(G)$ we write $e\in q$ if $e$ is an edge in $q$.
Similarly, for $v\in V(G)$, we write $v \in q$ if $v$ is a vertex of~$q$.

We will mostly consider \emph{labeled digraphs}, which are pairs $(G,\beta_G)$ of a digraph $G$ and a label function $\beta_G:E(G)\to R[\mathbf x]$.
For an edge $e\in E(G)$ of a labeled digraph $(G,\beta_G)$, we define the weight $\beta(e) := \beta_G(e)$.
We usually do not give an explicit name to the label function, and we write ``$G$ is a labeled digraph'', and we access the weight via $\beta(e)$.
For a walk or cycle $q$ on a labeled digraph $G$, define the weight $\beta(q)$ of $q$ as
\[
\textstyle\beta(q) \ := \ \prod_{e \in q} \beta(e).
\]
Labeled digraphs are drawn with their labels on the edges,
where edges with label 0 are usually not drawn.

In a digraph $G$, a set of pairwise vertex-disjoint cycles is called a \emph{partial cycle cover}.
The sign of a partial cycle cover is defined as the product of the signs of its cycles (1 if it is the empty partial cycle cover).
The length $\ell(q)$ of a partial cycle cover $q$ is the sum of the lengths of its cycles (0 if it is the empty partial cycle cover).
Let $\cyco(G)$ denote the set of partial cycle covers of~$G$,
and let $\cyco_d(G)$ denote the set of partial cycle covers of $G$ of length~$d$.
Let $K_n$ denote the complete digraph on $n$ vertices, i.e., the digraph with vertex set $\{v_1,\ldots,v_n\}$ and edge set $\{(v_i,v_j)\mid 1\leq i,j\leq n\}$.
We attach edge labels to $K_n$ via $\beta(v_i,v_j):=x_{i,j}$, making $K_n$ a labeled digraph.
We use the short notation $\cyco = \cyco(K_n)$, $\cyco_d = \cyco_d(K_n)$, $P(v_a,v_b) = P(K_n,v_a,v_b)$, $W(v_a,v_b) = W(K_n,v_a,v_b)$, $V=V(K_n)$, and so on.
We have
\begin{equation}
\label{eq:detascyccov}
\textstyle\det_d = \sum_{q \in \cyco_d(K_d)} \sgn(q) \, \beta(q).
\end{equation}
By \eqref{eq:cpcS} it follows
\begin{equation}\label{eq:cpccyccov}
\textstyle\cpc_{n,d} = \sum_{q \in \cyco_d} \sgn(q) \, \beta(q).
\end{equation}
The set $\CCP_d(v_a,v_b)$
consists of all pairs $(q,z)$ for which
$z\in P(v_a,v_b)$, $q\in \cyco$, $\ell(z)+\ell(q)=d$, and $z$ and $q$ are vertex-disjoint.
$\CCP$ stands for (partial) cycle cover and path.

\subsection{Clow sequences}
\label{subsec:clowsequences}
The combinatorial concept of a clow sequence is defined in \cite{MV97} as follows,
and it is a bit more technically involved than what we have introduced so far.
We will not use this concept,
but we list its definition here to make it apparent that our construction answers the question in \cite[\S4]{MV99}.

A clow (CLOsed Walk) is a walk $q=(v_0,\ldots,v_l)$ on $G$ with $v_0=v_l$
such that the smallest entry in $\{v_1,\ldots,v_l\}$ appears exactly once.
This entry is called the head $h(q)$ of the clow.
A clow sequence $(q_1,q_2,\ldots,q_k)$ is a sequence of clows
such that $h(q_1)<h(q_2)<\cdots<h(q_k)$.
As noted in \cite{MV97}, a (partial) cycle cover is a clow sequence, but not vice versa.
In our construction Theorem~\ref{thm:wdetnsquare},
all polynomials computed along the way in our ABP
correspond to elements in some $\CCP$.
Since clow sequences can self-intersect, our construction only uses a small subset of the clow sequences.
It is reasonable to assume that some existing proofs in the literature that rely on the combinatorics of clow sequences simplify when phrasing them via $\CCP$ instead,
which are just partial cycle covers with one edge removed.

As elaborated in \S\ref{sec:relwork} in the context of GCT,
having constructions without ad-hoc combinatorial gadgets can be preferable,
especially when working with advanced tools from algebra and representation theory.
Our construction in Theorem~\ref{thm:wdetnsquare} uses only the first order partial derivatives of the $\cpc_{n,d}$,
which is an inherently algebraic notion.
No inherently algebraic notion is known to correspond to clow sequences.

\section{The Bivariate Cayley--Hamilton Theorem}
\label{subsec:genCayHam}

In this section we prove the bivariate Cayley--Hamilton theorem.
The proof is a combination of
the combinatorial proof of the Cayley--Hamilton theorem in \cite{Str83}
and
the combinatorial proof of the Girard--Newton identities in \cite{Zei84}.

\begin{theorem}[Bivariate Cayley--Hamilton Theorem]
\label{thm:CayleyHamilton}
For all $n,d\in \IN$ we have
\[
(\nabla \cpc_{n,d+1})^\transpose \ = \ \sum_{i=0}^d (-1)^{i} \, \cpc_{n,d-i} \, X_n^i.\qedhere
\]
\end{theorem}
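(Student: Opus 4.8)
The plan is to prove the identity entrywise: I would write the $(j,k)$ entry of each side as an explicit signed weighted sum of combinatorial objects in the complete digraph $K_n$, observe that the left‑hand side's objects form a distinguished subfamily of the right‑hand side's, and then exhibit a sign‑reversing, weight‑preserving involution on the remaining objects. This simultaneously generalizes Straubing's involution for the Cayley--Hamilton theorem (the case $d=n$, where the distinguished subfamily is empty) and Zeilberger's for the Girard--Newton identities (the diagonal/trace case).

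First I would unwind the left‑hand side. Using \eqref{eq:cpccyccov}, $\cpc_{n,d+1}=\sum_{q\in\cyco_{d+1}}\sgn(q)\beta(q)$, and since a partial cycle cover uses each edge at most once, $\partial_{k,j}(\cpc_{n,d+1})$ keeps exactly the covers $q$ containing the edge $(v_k,v_j)$, with that edge's label stripped off. Deleting $(v_k,v_j)$ from $q$ opens the cycle through it into a path $z$ from $v_j$ to $v_k$ and leaves a vertex‑disjoint partial cycle cover $q'$ with $\ell(z)+\ell(q')=d$; the cycle had length $\ell(z)+1$ and hence sign $(-1)^{\ell(z)}$, so
\[
[(\nabla \cpc_{n,d+1})^\transpose]_{j,k} \ = \ \partial_{k,j}(\cpc_{n,d+1}) \ = \ \sum_{(z,q')\in\CCP_d(v_j,v_k)}(-1)^{\ell(z)}\,\sgn(q')\,\beta(z)\,\beta(q').
\]
On the right‑hand side, since $[X_n^i]_{j,k}=\sum_{w\in W_i(K_n,v_j,v_k)}\beta(w)$ and $\cpc_{n,d-i}=\sum_{q''\in\cyco_{d-i}}\sgn(q'')\beta(q'')$, the $(j,k)$ entry of $\sum_{i=0}^d(-1)^i\cpc_{n,d-i}X_n^i$ is $\sum(-1)^{\ell(w)}\sgn(q'')\beta(w)\beta(q'')$, summed over all pairs $(w,q'')$ with $w$ a walk from $v_j$ to $v_k$, $q''$ a partial cycle cover, and $\ell(w)+\ell(q'')=d$ — with no disjointness imposed and $w$ allowed to self‑intersect. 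The pairs in which $w$ is a simple path disjoint from $q''$ are exactly the summands of the left‑hand side (the sign $(-1)^{\ell(w)}$ matches $(-1)^{\ell(z)}$, and $\beta(w)\beta(q'')=\beta(z)\beta(q')$), so it remains to cancel all the other ``bad'' pairs.

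For that cancellation I would define an involution $\phi$ on bad pairs. Traverse $w=(v_j=u_0,u_1,\ldots,u_i=v_k)$ and let $m$ be the least index for which $u_m$ lies on a cycle of $q''$ or $u_m\in\{u_0,\ldots,u_{m-1}\}$; such an $m$ exists precisely for bad pairs, and by minimality exactly one of the two alternatives holds at $m$. In the first case, let $C$ be the cycle of $q''$ through $u_m$, and let $\phi$ splice $C$ into the walk at $u_m$ and delete it from $q''$. In the second case, $u_m=u_{m'}$ for a unique $m'<m$, the vertices $u_{m'},\ldots,u_{m-1}$ form a simple cycle $C$ disjoint from $q''$, and $\phi$ cuts it out of the walk and inserts it into $q''$. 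Both moves preserve $\beta(w)\beta(q'')$ and the length constraint, and flip the sign, since moving a cycle of length $|C|$ between $w$ and $q''$ multiplies $(-1)^{\ell(w)}\sgn(q'')$ by $(-1)^{|C|}\cdot(-1)^{|C|-1}=-1$.

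The step I expect to demand the most care is verifying that $\phi\circ\phi=\mathrm{id}$, i.e.\ that the two move types are mutually inverse. The point is that after a splice the new least critical index is $m+|C|$ and is of the second type — the walk has just returned to $u_m$ — and it points back to precisely the inserted copy of $C$; dually, after a cut the new least critical index is $m'$ and is of the first type, because $u_{m'}$ now lies on $q''$, and it reinserts exactly $C$. I would check the boundary cases separately: $m=0$ (the walk starts on $q''$), $|C|=1$ (a loop), and $\ell(w)=0$ (so $j=k$ and $w$ is a single vertex). Granting this, bad pairs cancel in sign‑opposite couples, the fixed points are exactly the pairs $(z,q')\in\CCP_d(v_j,v_k)$, and the entrywise identity follows, proving the theorem.
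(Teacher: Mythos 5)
Your proposal is correct and follows essentially the same route as the paper's proof: expand both sides entrywise over $K_n$ via \eqref{eq:cpccyccov}, match the pairs in $\CCP_d$ with the gradient entries by deleting/reinserting the distinguished edge, and cancel the remaining walk--cover pairs by a sign-reversing involution keyed to the first vertex at which the walk either revisits itself or meets the cycle cover. The paper phrases this involution as a bijection between sets $S_v$ and $T_v$ indexed by that first critical vertex, but this is the same cancellation (in the spirit of Straubing and Zeilberger) with the same sign bookkeeping.
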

\begin{proof}
We prove this matrix equation for each position $(a,b)$ independently, so fix $1\leq a,b\leq n$ and consider the entry at row $a$ and column~$b$. We write $[X]_{a,b}$ for the entry of the matrix $X$ at position $(a,b)$.

For the left-hand side of the claimed identity,
by \eqref{eq:cpccyccov} we have
\begin{equation}\label{eq:nabla}
[(\nabla \cpc_{n,d+1})^\transpose]_{a,b} \ = \ [\nabla \cpc_{n,d+1}]_{b,a} \ = \ \sum_{\substack{r \in \cyco_{d+1} \\ (v_b,v_a)\in r}} \sgn(r) \, \frac{\beta(r)}{x_{b,a}}
\end{equation}
(the division here and all divisions in this paper are just for notational convenience, not actual algorithmic division operations).

For the right-hand side of the claimed identity,
for every $i \in \{0,\ldots,d\}$ we have
\[
[(-1)^{i} X_n^i]_{a,b} = \sum_{z \in W_i(v_a,v_b)} \sgn(z) \, \beta(z).
\]
Therefore,
\begin{eqnarray}
\label{eq:sumqp}
\sum_{i=0}^d \cpc_{n,d-i} \cdot [(-1)^{i} X_n^i]_{a,b} & = & \sum_{i=0}^d \sum_{\substack{q \in \cyco_{d-i} \\ z \in W_i(v_a,v_b)}}
\sgn(q) \, \beta(q) \, 
\sgn(z) \, \beta(z)
\nonumber
\\
& = & 
\sum_{\substack{q \in \cyco \\ z \in W(v_a,v_b)\\\ell(q)+\ell(z)=d}}
\sgn(q) \, \beta(q) \, 
\sgn(z) \, \beta(z).
\end{eqnarray}
We now prove that the right-hand side of \eqref{eq:sumqp} equals the right-hand side of \eqref{eq:nabla},
which finishes the proof.
We first handle all summands in the right-hand side of \eqref{eq:nabla} and show how they appear in
the right-hand side of \eqref{eq:sumqp}.
We then show that the remaining summands in the right-hand side of \eqref{eq:sumqp} cancel out with each other.

For each $r \in \cyco_{d+1}$ with $(v_b,v_a)\in r$, we construct a pair $(q,z)\in\CCP_d(v_a,v_b)$
by removing the edge $(v_b,v_a)$ from $r$
(note that if $a=b$, then the loop $(v_a,v_a)$ is removed, and hence $z$ is the length 0 path from $v_a$ to $v_a$).
We observe that $\sgn(r)\beta(r)/x_{b,a}=\sgn(q)\beta(q)\sgn(z)\beta(z)$.
Every pair $(q,z)\in\CCP_d(v_a,v_b)$
is obtained from some $r$ in this way as follows.
From $(q,z)$ we construct $r$ by adding back the edge $(v_b,v_a)$.

It remains to show that all other summands in the right-hand side of \eqref{eq:sumqp} cancel out with each other.
These remaining pairs $(z,q)$, $q\in\cyco$, $z\in W(v_a,v_b)$, $\ell(q)+\ell(z)=d$, are exactly those for which
\begin{compactitem}
\item either $z$ uses a vertex at least twice (i.e., $z\in W(v_a,v_b)\setminus P(v_a,v_b)$),
\item or $z$ and $q$ are not vertex-disjoint,
\end{compactitem}
or both.
Let $X$ denote the set of these pairs $(z,q)$.
We will decompose $X$ into a disjoint union $X=\dot\bigcup_{v\in V} (S_v \mathop{\dot\cup} T_v)$ of $2n$ many sets $S_v$ and $T_v$.
For each $v$ we will give a weight-preserving sign-inverting bijection between $S_v$ and $T_v$.
This then finishes the proof.

Every $(z,q)\in X$ belongs to exactly one set $S_v$ or $T_v$
according to the following rule.
We start at the vertex $v_a$ and traverse the walk $z\in W(v_a,v_b)$ until we either use a vertex $v$ for the second time (this means that $(z,q) \in S_v$) or $z$ uses a vertex $v$ of $q$ (this means that $(z,q) \in T_v$), at which time we stop the traversal.
Note that each $(z,q)$ is contained in a unique set: if a vertex $v$ that is a vertex of $q$ is used a second time by~$z$, then this vertex of $q$ would have been detected in the traversal as a vertex of $q$ before visiting it a second time.

For $(z,q)\in S_v$, we see that $z$ starts with the vertices of a path $p$ from $v_a$ to $v$,
then continues with the vertices of a cycle $c$ from $v$ to $v$, and then $z$ continues with the vertices of a walk $t$ from $v$ to $v_b$.
Note that $q$ and $c$ are vertex-disjoint, because otherwise the traversal of $z$ would have found their common vertex before reaching~$v$ a second time.
We define $z'$ as a modification of $z$ as follows.
The walk $z'$ starts with $p$ and then directly continues with $t$ without taking the cycle $c$.
We define $q'$ to be the union of $q$ and~$c$.
The situation is illustrated in Figure~\ref{fig:CH}.
We observe that $(z',q')\in T_v$ by construction.
Note $\ell(q')+\ell(z')=d$. We have $\sgn(q)=(-1)^{\ell(c)-1}\sgn(q')$ and $\sgn(z)=(-1)^{\ell(c)}\sgn(z')$,
hence $\sgn(q)\,\sgn(z) = -\sgn(q')\,\sgn(z')$ as desired.
\begin{figure}[htbp]
    \centering
    \begin{subfigure}[b]{0.45\textwidth}
        \centering
\begin{tikzpicture}[xscale=1.25]
\node[draw,circle,inner sep=0,minimum size=0.5cm] (1) at (1,0) {$v_1$};
\node[draw,circle,inner sep=0,minimum size=0.5cm] (2) at (2,0) {$v_2$};
\node[draw,circle,inner sep=0,minimum size=0.5cm] (3) at (3,0) {$v_3$};
\node[draw,circle,inner sep=0,minimum size=0.5cm] (4) at (4,0) {$v_4$};
\node[draw,circle,inner sep=0,minimum size=0.5cm] (5) at (5,0) {$v_5$};
\node[draw,circle,inner sep=0,minimum size=0.5cm] (6) at (6,0) {$v_6$};
\draw[-Triangle] (1) -- (2);
\draw[-Triangle] (2) to[bend left] (3);
\draw[-Triangle] (3) to[bend left] (2);
\draw[-Triangle] (2.north) to[bend left] (4);
\draw[-Triangle] ($(4.north)!0.5!(4.north east)$) to[out=45, in=135, looseness=8] ($(4.north)!0.5!(4.north west)$);
\draw[-Triangle,dashed] (4) to[bend left] (6);
\draw[-Triangle,dashed] (6) to[bend left] (4);
\end{tikzpicture}
        \caption{$(z,q)\in S_{v_2}$ with $z=(v_1,v_2,v_3,v_2,v_4,v_4)$ and $q = \{(\!(v_4,v_6)\!)\}$. We have $p=(v_1,v_2)$, $c = (\!(v_2,v_3)\!)$, and $t = (v_2,v_4,v_4)$.}
        \label{fig:CHsecond}
    \end{subfigure}
    \hfill
    \begin{subfigure}[b]{0.45\textwidth}
        \centering
\begin{tikzpicture}[xscale=1.25]
\node[draw,circle,inner sep=0,minimum size=0.5cm] (1) at (1,0) {$v_1$};
\node[draw,circle,inner sep=0,minimum size=0.5cm] (2) at (2,0) {$v_2$};
\node[draw,circle,inner sep=0,minimum size=0.5cm] (3) at (3,0) {$v_3$};
\node[draw,circle,inner sep=0,minimum size=0.5cm] (4) at (4,0) {$v_4$};
\node[draw,circle,inner sep=0,minimum size=0.5cm] (5) at (5,0) {$v_5$};
\node[draw,circle,inner sep=0,minimum size=0.5cm] (6) at (6,0) {$v_6$};
\draw[-Triangle] (1) -- (2);
\draw[-Triangle,dashed] (2) to[bend left] (3);
\draw[-Triangle,dashed] (3) to[bend left] (2);
\draw[-Triangle] (2.north) to[bend left] (4);
\draw[-Triangle] ($(4.north)!0.5!(4.north east)$) to[out=45, in=135, looseness=8] ($(4.north)!0.5!(4.north west)$);
\draw[-Triangle,dashed] (4) to[bend left] (6);
\draw[-Triangle,dashed] (6) to[bend left] (4);
\end{tikzpicture}
        \caption{$(z',q')\in T_{v_2}$ with $z'=(v_1,v_2,v_4,v_4)$ and $q' = \{(\!(v_2,v_3)\!),(\!(v_4,v_6)\!)\}$. We have $p=(v_1,v_2)$, $c=(\!(v_2,v_3)\!)$, and $t=(v_2,v_4,v_4)$.}
        \label{fig:CHcommon}
    \end{subfigure}
    \caption{The situation in the proof of Theorem~\ref{thm:CayleyHamilton}.
The left and right subfigure are partner summands in the sign-inverting involution.
The figures are adaptions of figures in \cite{Str83}.}
    \label{fig:CH}
\end{figure}
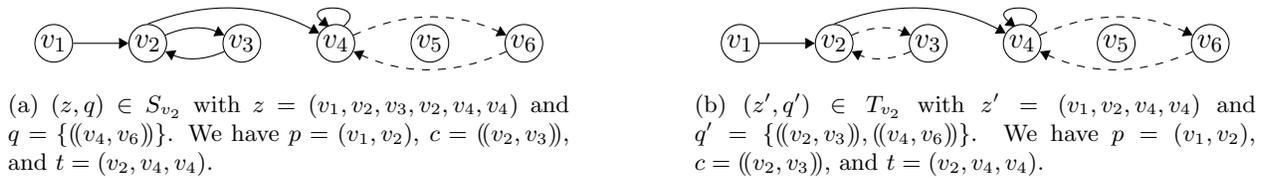

For $(z',q')\in T_v$, the walk $z'$ starts with the vertices of a path $p$ from $v_a$ to $v$,
where $v$ is part of a cycle $c$ of $q'$,
and then $z'$ continues with the vertices of a walk $t$ from $v$ to $v_b$.
We define $z$ to start with the vertices of the path $p$, then continue with the vertices of the cycle $c$, then continue with the walk $t$.
We define $q$ to be $q'$ with $c$ removed.
The situation is illustrated in Figure~\ref{fig:CH}.
We observe that $(z,q)\in S_v$ by construction.
Note $\ell(q)+\ell(z)=d$. We have $\sgn(q)=(-1)^{\ell(c)-1}\sgn(q')$ and $\sgn(z)=(-1)^{\ell(c)}\sgn(z')$,
hence $\sgn(q)\,\sgn(z) = -\sgn(q')\,\sgn(z')$ as desired.
\end{proof}

\section{The entries of the gradient}
\label{sec:optimized}
Theorem~\ref{thm:CayleyHamilton} highlights the importance of the gradient transpose, $(\nabla\cpc_{n,d})^\transpose$.
We analyze the entries of $(\nabla\cpc_{n,d})^\transpose$ more closely in this section.

For a finite set of polynomials $S$ we define $\w(S)$ to be the width of an ABP that computes every element of $S$ along the way.
Such an ABP is allowed to have several sinks in vertex layer $d$.
The number of vertices for the ABP computing $\{\cpc_{n,d}\mid d\leq n\}$ in \cite{MV97,MV97SODA} (excluding source and sinks) is $n^3$, and the width is $n^2$ (see the discussion about pruning in \cite[\S5]{MV97}).
With a similar technique as in Proposition~\ref{pro:wdetpoly} and a bit more care
we prove the following Theorem~\ref{thm:wdetnsquare}, which improves the leading coefficients of the size by a factor of 3 and width by a factor of 2 compared to \cite{MV97}.
Instead of clow sequences, we use cycle-cover-and-paths ($\CCP$).
\begin{theorem}
\label{thm:wdetnsquare}
$\w(\{\cpc_{n',d'}\mid n'\leq n, \, d'\leq d\})\leq \frac{n^2}{2} + \textup{l.o.t.}$
The total number of vertices in this construction is $\frac{dn^2}{2}-\frac{d^3}{6} + \textup{l.o.t.}$,
which for $d=n$ gives $\frac{n^3}{3} + \textup{l.o.t.}$
\end{theorem}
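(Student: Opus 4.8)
The plan is to refine the recursive ABP construction of Proposition~\ref{pro:wdetpoly} so that the combinatorial objects computed along the way are exactly the elements of the various $\CCP_{d'}(v_a,v_b)$, and to be careful about reuse of vertices across the recursion so that the width of each ``power block'' is as small as possible. The starting point is equation~\eqref{eq:Samuelson}, which reads $\cpc_{n-1,d}=\sum_{i=0}^d (-1)^i\,\cpc_{n,d-i}\,\pow_{n,i}$, together with its rearrangement \eqref{eq:cpcrecurse}. The key observation is that, by \eqref{eq:nabla}, every entry $[(\nabla\cpc_{n,d+1})^\transpose]_{a,b}$ is itself a signed weighted sum over the set $\CCP_d(v_a,v_b)$ of cycle-cover-and-paths, so the natural intermediate objects of a width-efficient ABP are the polynomials $\sum_{(q,z)\in\CCP_{d'}(v_a,v_b)}\sgn(q)\sgn(z)\beta(q)\beta(z)$, and $\pow_{n,i}=[X_n^i]_{n,n}$ is the special case where the cycle cover $q$ is empty. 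I would first build, for each fixed target $(n,d)$, a family of sub-ABPs computing the $\pow_{n',i}$ for $i\le d'$; a single walk-counting gadget on $n'$ vertices computes $[X_{n'}^{i}]_{n',n'}$ for every $i$ simultaneously along the way, contributing width at most $n'$, and these can be laid out so the power blocks feeding into a fixed $v_{n',d'}$ share vertices.

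Next I would assemble the global ABP by iterating \eqref{eq:cpcrecurse} in the same order of pairs $(n',d')$ as in Proposition~\ref{pro:wdetpoly}, namely $(1,1),(2,1),\ldots,(n,1),(1,2),\ldots,(n,d)$, placing $\cpc_{n',d'}$ at a vertex $v_{n',d'}$, adding the label-$1$ edge from $v_{n'-1,d'}$ when $n'>d'$, and for each $i\in\{1,\ldots,d'\}$ routing an edge carrying $(-1)^{i+1}\pow_{n',i}$ from $v_{n',d'-i}$ to $v_{n',d'}$. The improvement over \cite{MV97} comes from two accountings. For the width: in vertex layer $j$ the only internal vertices are those of the power-block gadgets currently ``in flight'', and by sharing the walk-gadget vertices among the $\pow_{n',i}$ with common $n'$ (and by pruning vertices that cannot reach the relevant sink, exactly as in \cite[\S5]{MV97}) the dominant contribution is one block of width roughly $n'$ per value of $n'\le n$ active at that layer, and one checks that this sums to $\tfrac{n^2}{2}+\text{l.o.t.}$ rather than $n^2$. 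For the vertex count: vertex $v_{n',d'}$ exists only for $d'\le n'$ (since $\cpc_{n',d'}=0$ otherwise), and the power blocks attached at $v_{n',d'}$ use $O(n'd')$ vertices after sharing; summing $\sum_{n'\le n}\sum_{d'\le\min(n',d)} c\,n'$ and extracting the leading term gives $\tfrac{dn^2}{2}-\tfrac{d^3}{6}+\text{l.o.t.}$, which specializes to $\tfrac{n^3}{3}+\text{l.o.t.}$ when $d=n$. Correctness at each $v_{n',d'}$ follows from \eqref{eq:cpcrecurse} by induction on the construction order, exactly as in Proposition~\ref{pro:wdetpoly}; the claim that every polynomial computed along the way lies in some $\CCP$ follows because all the gadget-internal computations are partial-walk sums with the outermost edge of a cycle removed.

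I expect the main obstacle to be the bookkeeping that turns the crude $O(n^2d^2)$ width bound of Proposition~\ref{pro:wdetpoly} into the sharp constant $\tfrac12$: one must argue that the power blocks for different $i$ but the same row size $n'$ can genuinely be merged into a single width-$n'$ gadget (a single iterated-matrix-multiplication-style chain exposes $[X_{n'}^i]_{n',n'}$ for all $i$ at once), that across the recursion only $O(1)$ such blocks per value of $n'$ overlap in any single layer after a layer-by-layer pruning, and that the leading term of the resulting double sum is $\tfrac{n^2}{2}$. The second delicate point is the vertex count: the $-\tfrac{d^3}{6}$ correction is precisely the ``triangle'' of pairs $(n',d')$ with $d'>n'$ that never appear, so one must verify $\sum_{n'=1}^{n}\min(n',d) = dn-\tfrac{d^2}{2}+\text{l.o.t.}$ and then weight by the $O(n')$ vertices per block and sum again, keeping only the leading term; this is routine but must be done carefully to land on $\tfrac{dn^2}{2}-\tfrac{d^3}{6}$. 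Everything else --- correctness, homogeneity of layers, the $\w(\pow_{n',i})\le n'$ bound --- is inherited from the machinery already set up in \S\ref{sec:prelim} and Proposition~\ref{pro:wdetpoly}.
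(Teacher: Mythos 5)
Your target numbers and your instinct that the intermediate quantities should be the $\CCP$ sums (i.e., entries of $(\nabla\cpc_{n,d})^\transpose$) are both right, but the proposal has a genuine gap exactly at the point you flag as ``the main obstacle''. Keeping the skeleton of Proposition~\ref{pro:wdetpoly} --- vertices $v_{n',e}$ with power blocks $\pow_{n',i}$ running from $v_{n',e}$ to $v_{n',e+i}$ --- the blocks launched from $v_{n',e}$ for \emph{different} $e$ compute different polynomials (roughly $\cpc_{n',e}\cdot[X_{n'}^{\,j-e}]_{a,n'}$ at layer $j$) and therefore cannot literally share vertices; at layer $j$ up to $j$ such blocks per value of $n'$ are in flight, giving width $\Theta\big(j\sum_{n'}n'\big)=\Theta(jn^2)$, not $\tfrac{n^2}{2}$, and no pruning fixes this. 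The only way to get one width-$n'$ gadget per $n'$ per layer is to \emph{sum} the in-flight blocks, i.e., to let the $a$-th internal vertex at layer $j$ carry $\sum_{e}(\pm)\,\cpc_{n',e}\,[X_{n'}^{\,j-e}]_{a,n'}$, which by Theorem~\ref{thm:CayleyHamilton} is an entry of the last row $r_{n',j}$ of the gradient transpose. But then one must show that this merged vertex advances to layer $j+1$ by a \emph{single} edge layer, and that is a genuinely new identity: the paper proves the block recursion $r_{n,d}=\big(-r_{n,d-1}L_n \mid \sum_{i} r_{i,d-1}C_i\big)$ of \eqref{eq:rnd} (with $X_n=(L_n|C_n)$), whose left block requires a fresh weight-preserving sign-inverting involution on ``tadpole'' configurations (a path from $v_n$ concatenated with a cycle through $v_a$), distinct from the involution used for Theorem~\ref{thm:CayleyHamilton}. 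Your proposal neither states nor proves this recursion, and \eqref{eq:Samuelson}/\eqref{eq:cpcrecurse} alone do not imply it.

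Once \eqref{eq:rnd} is available, the counting is the easy part and your arithmetic agrees with the paper's: layer $j$ holds the vectors $r_{i,j}$ for $j<i\le n$, totalling $\sum_{i=j+1}^{n} i=\frac{(n-j)(n+j+1)}{2}$ vertices, maximal at $j=1$ where it equals $\binom{n+1}{2}-1=\frac{n^2}{2}+\textup{l.o.t.}$, and summing over $j\le d-1$ gives $(d-1)\binom{n+1}{2}-\binom{d+1}{3}=\frac{dn^2}{2}-\frac{d^3}{6}+\textup{l.o.t.}$ So the missing ingredient is not bookkeeping but the identity showing that the entire gradient row propagates by one matrix multiplication per layer.
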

\begin{proof}
Let $r_{n,d}$ be the last row of $(\nabla \cpc_{n,d+1})^\transpose$.
We decompose the $n\times n$ matrix $X_n$ into an $n \times (n-1)$ matrix $L_n$ and an $n\times 1$ matrix $C_n$, i.e.,
$X_n = (L_n|C_n)$.
The following block matrix identity is the main ingredient to build an ABP.
\begin{equation}
\label{eq:rnd}
r_{n,d} = \Big( -r_{n,d-1} L_n \,\Big|\, \sum_{i=d}^{n-1} r_{i,d-1} C_{i} \Big).
\end{equation}
We now prove \eqref{eq:rnd}, similarly to Theorem~\ref{thm:CayleyHamilton}.
For a vector $r$, let $[r]_a$ denote the $a$-th entry of $r$.
\begin{eqnarray*}
[r_{n,d}]_a & = & \sum_{\substack{q \in \cyco_{d+1}\\ (v_a,v_n) \in q}} \sgn(q) \frac{\beta(q)}{x_{a,n}}
 \ = \ 
\sum_{(q,z)\in\CCP_d(v_n,v_a)} \sgn(q) \, \beta(q) \, \sgn(z) \, \beta(z)
\end{eqnarray*}
In particular,
\begin{equation}\label{eq:rndncpc}
[r_{n,d}]_n = \cpc_{n-1,d}
\end{equation}
and
\begin{equation}
\label{eq:ridmCi}
r_{i,d-1} C_i \ = \ \sum_{a=1}^i \,
\sum_{\substack{q \in \cyco_d(K_i)\\ (v_a,v_i) \in q}} \sgn(q) \frac{\beta(q)}{x_{a,i}} \, x_{a,i}
 \ = \
\sum_{\substack{q \in \cyco_d(K_i)\\v_i \in q}} \sgn(q) \, \beta(q).
\end{equation}
For $1 \leq a < n$ we have
\begin{eqnarray}
[r_{n,d-1} L_n]_a\nonumber
&=&
\sum_{b=1}^n \ 
\sum_{(q,z)\in\CCP_{d-1}(v_n,v_b)} \sgn(q) \, \beta(q) \, \sgn(z) \, \beta(z) \, x_{b,a}\nonumber
\\
&\stackrel{(\ast)}{=}&
\sum_{b=1}^n \ 
\sum_{\substack{(q,z)\in\CCP_{d-1}(v_n,v_b)\\v_a\notin q, \ v_a\notin z}} \sgn(q) \, \beta(q) \, \sgn(z) \, \beta(z) \, x_{b,a}\nonumber
\\
&=&
- \sum_{(q,z')\in\CCP_d(v_n,v_a)} \sgn(q) \, \beta(q) \, \sgn(z') \, \beta(z')
\ = \ - [r_{n,d}]_a.\label{eq:rndrnd}
\end{eqnarray}
Since $q$ and $z$ in the sums are vertex-disjoint, to see $(\ast)$, it suffices to prove
\begin{eqnarray*}
&& \sum_{b=1}^n \ \sum_{\substack{(q,z)\in\CCP_{d-1}(v_n,v_b)\\v_a \in q}} \sgn(q) \, \beta(q) \, \sgn(z) \, \beta(z) \, x_{b,a}
\\
&=&
-\sum_{b'=1}^n \ \sum_{\substack{(q',z')\in\CCP_{d-1}(v_n,v_{b'})\\v_a \in z'}} \sgn(q') \, \beta(q') \, \sgn(z') \, \beta(z') \, x_{b',a}.
\end{eqnarray*}
This is proved analogously to the proof of Theorem~\ref{thm:CayleyHamilton}
via a weight-preserving sign-inverting bijection between the summands on the left-hand side and on the right-hand side.
Let $T_a = \{(q,z)\in\CCP_{d-1}(v_n,.)\mid v_a \in q\}$,
and let $S_a = \{(q',z')\in\CCP_{d-1}(v_n,.) \mid v_a \in z'\}$
where $\CCP_{d-1}(v_n,.)=\bigcup_{j\in\{1,\ldots,n\}}\CCP_{d-1}(v_n,v_j)$.
It will be important that $a\neq n$ by assumption.

For $(q,z)\in T_a$, let $v_b$ be the head of~$z$.
We traverse the edges of $z$ from $v_n$ to $v_b$ (these can be zero edges), then the edge $(v_b,v_a)$, and then the cycle $c$ in $q$ that contains $v_a$ (this can be a loop). This is a so-called directed tadpole graph, see Figure~\ref{fig:tadpole}.
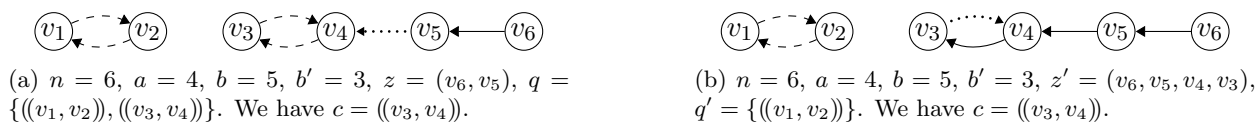
\begin{figure}[htbp]
    \centering
    \begin{subfigure}[b]{0.45\textwidth}
        \centering
\begin{tikzpicture}[xscale=1.25]
\node[draw,circle,inner sep=0,minimum size=0.5cm] (1) at (1,0) {$v_1$};
\node[draw,circle,inner sep=0,minimum size=0.5cm] (2) at (2,0) {$v_2$};
\node[draw,circle,inner sep=0,minimum size=0.5cm] (3) at (3,0) {$v_3$};
\node[draw,circle,inner sep=0,minimum size=0.5cm] (4) at (4,0) {$v_4$};
\node[draw,circle,inner sep=0,minimum size=0.5cm] (5) at (5,0) {$v_5$};
\node[draw,circle,inner sep=0,minimum size=0.5cm] (6) at (6,0) {$v_6$};
\draw[-Triangle,dashed] (1) to[bend left] (2);
\draw[-Triangle,dashed] (2) to[bend left] (1);
\draw[-Triangle,dashed] (3) to[bend left] (4);
\draw[-Triangle,dashed] (4) to[bend left] (3);
\draw[-Triangle] (6) to (5);
\draw[-Triangle,big dots,draw opacity=0] (5) to (4); 
\end{tikzpicture}
        \caption{$n=6$, $a=4$, $b=5$, $b'=3$, $z=(v_6,v_5)$, $q=\{(\!(v_1,v_2)\!),(\!(v_3,v_4)\!)\}$. We have $c=(\!(v_3,v_4)\!)$.}
        \label{fig:tadpoleleft}
    \end{subfigure}
    \hfill
    \begin{subfigure}[b]{0.45\textwidth}
        \centering
\begin{tikzpicture}[xscale=1.25]
\node[draw,circle,inner sep=0,minimum size=0.5cm] (1) at (1,0) {$v_1$};
\node[draw,circle,inner sep=0,minimum size=0.5cm] (2) at (2,0) {$v_2$};
\node[draw,circle,inner sep=0,minimum size=0.5cm] (3) at (3,0) {$v_3$};
\node[draw,circle,inner sep=0,minimum size=0.5cm] (4) at (4,0) {$v_4$};
\node[draw,circle,inner sep=0,minimum size=0.5cm] (5) at (5,0) {$v_5$};
\node[draw,circle,inner sep=0,minimum size=0.5cm] (6) at (6,0) {$v_6$};
\draw[-Triangle,dashed] (1) to[bend left] (2);
\draw[-Triangle,dashed] (2) to[bend left] (1);
\draw[-Triangle,big dots,draw opacity=0] (3) to[bend left] (4);
\draw[-Triangle] (4) to[bend left] (3);
\draw[-Triangle] (6) to (5);
\draw[-Triangle] (5) to (4);
\end{tikzpicture}
        \caption{$n=6$, $a=4$, $b=5$, $b'=3$, $z'=(v_6,v_5,v_4,v_3)$, $q'=\{(\!(v_1,v_2)\!)\}$. We have $c=(\!(v_3,v_4)\!)$.}
        \label{fig:tadpoleright}
    \end{subfigure}
    \caption{The situation in the proof of Theorem~\ref{thm:wdetnsquare}.
The left and right subfigure are partner summands in the sign-inverting involution.}
    \label{fig:tadpole}
\end{figure}
We construct $q'$ by removing the cycle~$c$ from~$q$. We construct $z'$ by starting with all edges in $z$, adding the edge $(v_b,v_a)$ to $z'$, and then traversing $c$ from $v_a$ until we reach the vertex $v_{b'}$ that satisfies $(v_{b'},v_a)\in c$, i.e., we add all but one of the edges of $c$ to~$z'$.
By construction, the set of edges of $z$ and $q$ together with $(v_b,v_a)$ equals the set of edges of $z'$ and $q'$ together with $(v_{b'},v_a)$.
Since $v_a \in z'$, we have $(q',z')\in S_a$.
The total number of edges stays the same, but the parity of the number of cycles changes in the process, which gives the desired sign change.

For $(q',z')\in S_a$, let $v_{b'}$ be the head of $z'$.
We have that $z'$ starts at $v_n$, $z'$ contains $v_a$ (note that $a\neq n$, hence $v_a$ has a well-defined predecessor in $z'$ that we call $v_b$), and ends at $v_{b'}$.
The edges of $z'$ from $v_a$ to $v_{b'}$ together with the edge $(v_{b'},v_a)$ form a cycle~$c$.
We remove all these edges from $z'$, and we also remove the edge $(v_b,v_a)$, to obtain the path~$z$ with head $v_b$.
We add the cycle $c$ to $q'$ to obtain~$q$.
By construction, the set of edges of $z'$ and $q'$ together with $(v_{b'},v_a)$ equals the set of edges of $z$ and $q$ together with $(v_b,v_a)$.
The total number of edges stays the same, but the parity of the number of cycles changes in the process, which gives the desired sign change.
This proves $(\ast)$, and hence \eqref{eq:rndrnd}, which is the left side of the block matrix in \eqref{eq:rnd}.

We now prove the right side of the block matrix in \eqref{eq:rnd}.
In the following, we group the partial cycle covers of length $d$ in $K_{n-1}$ into different sets according to the index $i$ of their highest numbered vertex:
\begin{equation}
\label{eq:computecpcnd}
[r_{n,d}]_n \stackrel{\eqref{eq:rndncpc}}{=} \cpc_{n-1,d} \ = \ \sum_{i=d}^{n-1}
\,\sum_{\substack{q \in \cyco_d(K_i)\\v_i \in q}} \sgn(q) \, \beta(q)
\stackrel{\eqref{eq:ridmCi}}{=} \sum_{i=d}^{n-1} r_{i,d-1} C_i.
\end{equation}

This proves the right side of the block matrix in \eqref{eq:rnd}, so the proof of \eqref{eq:rnd} is now complete.

Rewriting \eqref{eq:computecpcnd}, we have
\begin{equation}
\label{eq:computecpcndrewrite}
\cpc_{n,d} = \sum_{i=d}^n r_{i,d-1} C_i.
\end{equation}
We start by constructing an ABP that uses this identity and \eqref{eq:rnd} to compute $\cpc_{n,d}$.
To construct an ABP that computes all $r_{i,{d-1}}$, $d \leq i\leq n$,
we construct an ABP that computes all vectors $r_{i,j}$, $j < i \leq n$, $1 \leq j < d$ via \eqref{eq:rnd}.
By \eqref{eq:rndncpc} this computes all $\cpc_{i',j}$, $j \leq i' < n$, $1 \leq j < d$ along the way.
At degree $j$ we compute each vector $r_{i,j}$, $i \in \{j+1,\ldots,n\}$.
Each such vector has $i$ many elements, hence the total number of vertices in vertex layer $j$ is
$(j+1)+(j+2)+\cdots+n = \frac{(n-j)(n+j+1)}{2}$.
This is maximal in vertex layer 1, where we have $\binom{n+1}{2}-1$ many vertices.
Hence, $\w(\cpc_{n,d})\leq \binom{n+1}{2}-1$.
The total number of vertices (excluding source and sink) is $\sum_{j=1}^{d-1}\frac{(n-j)(n+j+1)}{2} = (d-1)\binom{n+1}{2}-\binom{d+1}{3}$.

We modify this ABP to compute the $\cpc_{n,j}$, $1 < j < d$, with one additional vertex each via \eqref{eq:computecpcndrewrite}.
Moreover, we use \eqref{eq:computecpcndrewrite} to also compute all $\cpc_{i,d}$, $d\leq i<n$, with one additional vertex each.
All traces $\cpc_{i,1}$, $1\leq i < n$, are computed along the way via \eqref{eq:rndncpc}.
The trace $\cpc_{n,1}$ can be computed with one additional vertex.
\end{proof}

For the sake of completeness, we now give the transition matrices for the construction in Theorem~\ref{thm:wdetnsquare}.
We write \eqref{eq:rnd} via block matrices:
\[
\big(r_{d,d-1}, r_{d+1,d-1}, r_{d+2,d-1}, \ldots, r_{n,d-1}\big)
\overbrace{\begin{pmatrix}
(0|C_d)      & (0|C_d)     &       \cdots     &  (0|C_d)     & (0|C_d)      & (0|C_d)     \\
(-L_{d+1}|0) & (0|C_{d+1}) &       \cdots     &  (0|C_{d+1}) & (0|C_{d+1})  & (0|C_{d+1}) \\
   0     & \ddots  &       \ddots     &  \vdots  & \vdots   & \vdots  \\
   0     & \ddots  &       \ddots     &  (0|C_{n-3}) & \vdots   & \vdots  \\
 \vdots  &   0     &         0        & (-L_{n-2}|0) & (0|C_{n-2})  & (0|C_{n-2}) \\
 \vdots  & \vdots  &       \vdots     &     0    & (-L_{n-1}|0) & (0|C_{n-1}) \\
   0     &   0     &         0        &     0    &    0     & (-L_n|0)
\end{pmatrix}}^{=: M_{n,d}}
\]
$
\hfill = \big(r_{d+1,d}, r_{d+2,d}, r_{d+3,d}, \ldots, r_{n,d}\big).\\
$
It is sometimes interesting to have an ABP whose edges are only labeled by constants or scalar multiples of variables, see Proposition~\ref{pro:CKV} below.
Since these matrices have only variables or their negations (or zero) in each cell,
and $C_d$ (the adjacency matrix of the last edge layer) also has only variables in each cell,
every edge in the ABP (besides the first edge layer) is either labeled with a variable $x_{i,j}$ or with $-x_{i,j}$. In the first edge layer, the edges are labeled with entries from the vectors $r_{n,1}$ for $2 \leq n \leq d$.
We have
\[
r_{n,1} \ = \ \big( -x_{n,1} \ \ -x_{n,2} \ \ \cdots \ \ -x_{n,n-1} \ \ \tr_{n-1} \big),
\]
where $\tr_n:=x_{1,1}+\cdots+x_{n,n}$.
The final matrix product is
\begin{equation}
\label{eq:numberofnonzeros}
(r_{2,1},r_{3,1},\ldots,r_{d,1})M_{d,2}M_{d,3}\cdots M_{d,d-1}C_d = \det_d.
\end{equation}
There is no need in the ABP to have edges labeled $\tr_1$, $\tr_2$, $\ldots$,
but we can instead label these edges with just $x_{1,1}$, $x_{2,2}$, $\ldots$,
and add edges with label 1 from each vertex that computes $x_{i,i}$ to the vertex that computes $x_{i+1,i+1}$,
exactly as in Figure~\ref{fig:charpolycharzero}.

The elementary symmetric polynomial $e_{n,d}$ is the restriction of $\cpc_{n,d}$ to diagonal matrices, and we have $\w(e_{n,d}) \leq n$.
Theorem~\ref{thm:wdetnsquare} gives $\w(\cpc_{n,d})\in O(n^2)$ via rather sparse matrices.
Moreover, the width of the construction is concentrated on the first layers.
Therefore, the following question seems reasonable.
\begin{openproblem}
\label{problem:wcpc}
Is $\w(\cpc_{n,d}) \in O(n)$?
\end{openproblem}

\section{Appendix}
\label{sec:appendix}
The arguments in this appendix are either well-known or variants of well-known results.
We included the appendix to make the paper more self-contained.

\medskip

In \S\ref{sec:intro} we defined a pABP as a $d$-tuple of $n \times n$ matrices of homogeneous linear polynomials
whose matrix product's bottom right entry is the computation result.
The width $\w(f)$ is the smallest $n$ so that $f$ is such a computation result.
Obviously, in the first matrix only the last row has any effect on the computation result,
and in the last matrix only the last column has any effect on the computation result,
so from now on we can assume that all other entries are zero.
These pABPs are in bijective correspondence to ABPs without constant edges in the obvious way:
The matrices specify the edge labels between the vertex layers.
In this section we interpret pABPs in this way, and not as a tuple of matrices.
This definition is for example used in \protect{\cite[Def 59]{KS14}}, under the name ABP.

\subsection{ABPs and pABPs give the same width}
Recall from \S\ref{sec:intro} the definition of an ABP~$G$,
which is a layered digraph labeled with homogeneous linear polynomials
with single source and sink that computes
\[
\sum_{p \in P(G,s,t)} \beta(p).
\]
Recall that the width of an ABP is the minimum of the number of vertices in all layers but the first and last.
The homogeneous width $\wh(f)$ is defined as the smallest $n$ such that $f$ is computed by a width $n$ ABP.

\begin{figure}
\centering
\begin{subfigure}[b]{0.47\textwidth}
\centering
\begin{tikzpicture}[xscale=1.8,yscale=1.17]
\node[draw,circle,inner sep=0,minimum width=5mm] (s) at (0,0) {$s$};
\node[draw,circle,inner sep=0,minimum width=5mm] (v1a) at (1,-1) {};
\node[draw,circle,inner sep=0,minimum width=5mm] (v1b) at (1,0) {};
\node[draw,circle,inner sep=0,minimum width=5mm] (v1c) at (1,1) {};
\node[draw,circle,inner sep=0,minimum width=5mm] (v2ab) at (2,-0.5) {};
\node[draw,circle,inner sep=0,minimum width=5mm] (v2bc) at (2,0.5) {};
\node[draw,circle,inner sep=0,minimum width=5mm] (v3a) at (3,-1) {};
\node[draw,circle,inner sep=0,minimum width=5mm] (v3b) at (3,0) {};
\node[draw,circle,inner sep=0,minimum width=5mm] (v3c) at (3,1) {};
\node[draw,circle,inner sep=0,minimum width=5mm] (t) at (4,0) {t};
\draw[-Triangle] (s) to node[pos=0.5,fill=white,inner sep=0,minimum size=0.5cm] {$x_1$} (v1a);
\draw[-Triangle] (s) to node[pos=0.5,fill=white,inner sep=0,minimum size=0.5cm] {$x_3$} (v1b);
\draw[-Triangle] (s) to node[pos=0.5,fill=white,inner sep=0,minimum size=0.5cm] {$x_5$} (v1c);
\draw[-Triangle] (v1a) to node[pos=0.5,fill=white,inner sep=0,minimum size=0.5cm] {$x_2$} (v2ab);
\draw[-Triangle] (v1b) to node[pos=0.5,fill=white,inner sep=0,minimum size=0.5cm] {$x_4$} (v2ab);
\draw[-Triangle] (v1c) to node[pos=0.5,fill=white,inner sep=0,minimum size=0.5cm] {$x_6$} (v2bc);
\draw[-Triangle] (v2ab) to node[left,pos=0.5,inner sep=0,minimum size=0.5cm] {$\ \ 1$} (v2bc);
\draw[-Triangle] (v2ab) to node[pos=0.5,fill=white,inner sep=0,minimum size=0.5cm] {$x_7$} (v3a);
\draw[-Triangle] (v2bc) to node[pos=0.5,fill=white,inner sep=0,minimum size=0.5cm] {$x_9$} (v3b);
\draw[-Triangle] (v2bc) to node[pos=0.5,fill=white,inner sep=0,minimum size=0.5cm] {$x_{11}$} (v3c);
\draw[-Triangle] (v3a) to node[pos=0.5,fill=white,inner sep=0,minimum size=0.5cm] {$x_{12}$} (t);
\draw[-Triangle] (v3b) to node[pos=0.5,fill=white,inner sep=0,minimum size=0.5cm] {$x_{10}$} (t);
\draw[-Triangle] (v3c) to node[pos=0.5,fill=white,inner sep=0,minimum size=0.5cm] {$x_{8}$} (t);
\end{tikzpicture}
\caption{An ABP. Note that there is a constant edge labeled with ``1''. A corresponding pABP is on the right.}
\label{subfig:compressed}
\end{subfigure}
\hfill
\begin{subfigure}[b]{0.47\textwidth}
\centering
\begin{tikzpicture}[xscale=1.8,yscale=1.17]
\node[draw,circle,inner sep=0,minimum width=5mm] (s) at (0,0) {$s$};
\node[draw,circle,inner sep=0,minimum width=5mm] (v1a) at (1,-1) {};
\node[draw,circle,inner sep=0,minimum width=5mm] (v1b) at (1,0) {};
\node[draw,circle,inner sep=0,minimum width=5mm] (v1c) at (1,1) {};
\node[draw,circle,inner sep=0,minimum width=5mm] (v2ab) at (2,-0.5) {};
\node[draw,circle,inner sep=0,minimum width=5mm] (v2bc) at (2,0.5) {};
\node[draw,circle,inner sep=0,minimum width=5mm] (v3a) at (3,-1) {};
\node[draw,circle,inner sep=0,minimum width=5mm] (v3b) at (3,0) {};
\node[draw,circle,inner sep=0,minimum width=5mm] (v3c) at (3,1) {};
\node[draw,circle,inner sep=0,minimum width=5mm] (t) at (4,0) {t};
\draw[-Triangle] (s) to node[pos=0.5,fill=white,inner sep=0,minimum size=0.5cm] {$x_1$} (v1a);
\draw[-Triangle] (s) to node[pos=0.5,fill=white,inner sep=0,minimum size=0.5cm] {$x_3$} (v1b);
\draw[-Triangle] (s) to node[pos=0.5,fill=white,inner sep=0,minimum size=0.5cm] {$x_5$} (v1c);
\draw[-Triangle] (v1a) to node[pos=0.3,fill=white,inner sep=0.5mm] {$x_2$} (v2ab);
\draw[-Triangle] (v1b) to node[pos=0.25,fill=white,inner sep=0.5mm] {$x_4$} (v2ab);
\draw[-Triangle] (v1a) to node[pos=0.3,fill=white,inner sep=0.5mm] {$x_2$} (v2bc);
\draw[-Triangle] (v1b) to node[pos=0.5,fill=white,inner sep=0,minimum size=0.5cm] {$x_4$} (v2bc);
\draw[-Triangle] (v1c) to node[pos=0.5,fill=white,inner sep=0,minimum size=0.5cm] {$x_6$} (v2bc);
\draw[-Triangle] (v2ab) to node[pos=0.5,fill=white,inner sep=0,minimum size=0.5cm] {$x_7$} (v3a);
\draw[-Triangle] (v2bc) to node[pos=0.5,fill=white,inner sep=0,minimum size=0.5cm] {$x_9$} (v3b);
\draw[-Triangle] (v2bc) to node[pos=0.5,fill=white,inner sep=0,minimum size=0.5cm] {$x_{11}$} (v3c);
\draw[-Triangle] (v3a) to node[pos=0.5,fill=white,inner sep=0,minimum size=0.5cm] {$x_{12}$} (t);
\draw[-Triangle] (v3b) to node[pos=0.5,fill=white,inner sep=0,minimum size=0.5cm] {$x_{10}$} (t);
\draw[-Triangle] (v3c) to node[pos=0.5,fill=white,inner sep=0,minimum size=0.5cm] {$x_{8}$} (t);
\end{tikzpicture}
\caption{An example pABP. There are no constant edges.\\\mbox{~}}
\label{subfig:uncompressed}
\end{subfigure}
\caption{An ABP and a pABP, both computing the same $f\in R[\mathbf x]_d$.
On the left: The ABP. On the right: The ABP after the constant edges are removed. The result is a pABP.}
\label{fig:compresseduncompressed}
\end{figure}
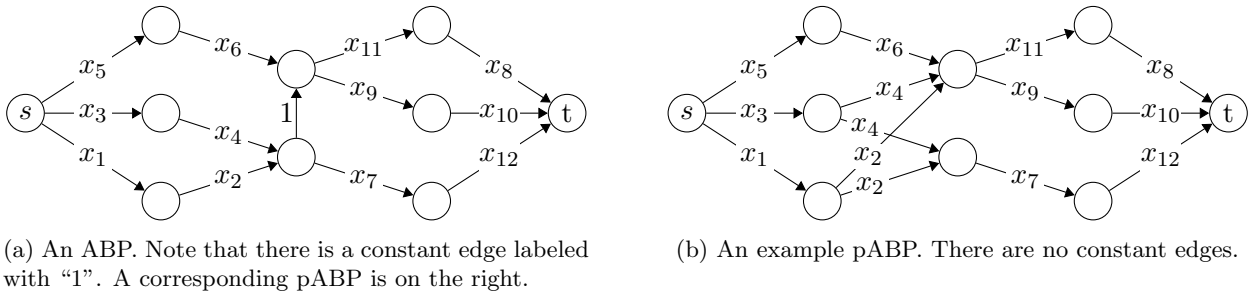

\begin{proposition}
\label{pro:wwh}
For every $f \in R[\mathbf x]_d$ we have $\w(f)=\wh(f)$.
\end{proposition}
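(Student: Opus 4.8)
The plan is to establish the two inequalities $\wh(f)\le\w(f)$ and $\w(f)\le\wh(f)$ separately; the first is a definitional observation and the second carries all the content.

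\textbf{The easy inequality.} A pABP witnessing $\w(f)=n$ is, under the bijection with constant-edge-free ABPs recalled at the start of the appendix, a homogeneous ABP with exactly $n$ vertices in every layer $0,\dots,d$, whose source is the last vertex of layer $0$ and whose sink is the last vertex of layer $d$ (in a pABP there are no intra-layer edges, so no other layer-$0$ vertex lies on an $s$-$t$-path, and likewise for layer $d$). Its width, being the number of vertices in an internal layer, is $n$, so $\wh(f)\le n=\w(f)$.

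\textbf{The main inequality.} I would start from a homogeneous ABP $G$ of width $w:=\wh(f)$ computing $f$, with layers $V_0,\dots,V_d$, source $s\in V_0$ and sink $t\in V_d$, and first collapse the constant edges. For each $i$ the constant edges inside layer $i$ form an acyclic subdigraph (since $G$ is a DAG), so their adjacency matrix $C_i$ is nilpotent and $B_i:=I+C_i+C_i^2+\cdots$ is a well-defined matrix over $R$, whose $(u,v)$-entry is the weighted count of constant-edge walks from $u$ to $v$ inside layer $i$. Writing $M_i$ for the $|V_i|\times|V_{i+1}|$ matrix of homogeneous linear labels on the edges from layer $i$ to layer $i+1$, and splitting every $s$-$t$-path of $G$ into its constant stretches within each layer and its $d$ crossing edges, summation yields
\[
f \;=\; e_s^{\transpose}\,B_0\,M_0\,B_1\,M_1\cdots B_{d-1}\,M_{d-1}\,B_d\,e_t ,
\]
where $e_s,e_t$ are the indicator columns of $s$ and $t$. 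I would then regroup the right-hand side into exactly $d$ factors,
\[
P_1:=e_s^{\transpose}B_0M_0B_1,\qquad P_k:=M_{k-1}B_k\ \ (2\le k\le d-1),\qquad P_d:=M_{d-1}B_de_t ,
\]
so that $f=P_1P_2\cdots P_d$, where $P_1$ is a single row of length $|V_1|$, $P_d$ a single column of length $|V_{d-1}|$, and $P_k$ is $|V_{k-1}|\times|V_k|$. Each entry of each $P_k$ is an $R$-linear combination of entries of one $M_i$, hence lies in $R[\mathbf x]_1$. Since $|V_1|,\dots,|V_{d-1}|\le w$, padding every $P_k$ with zero rows and columns up to size $w\times w$ — placing the row of $P_1$ in the last row and the column of $P_d$ in the last column — leaves the bottom-right entry of the matrix product equal to $f$. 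This is a pABP of width $w$, so $\w(f)\le w=\wh(f)$.

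\textbf{Main obstacle.} The argument is routine, so the real work is bookkeeping: checking that the intra-layer constant edges are acyclic so that $B_i$ makes sense, that the path-splitting identity is arranged so that precisely the $d$ matrices $M_0,\dots,M_{d-1}$ appear, and that the regrouping again produces exactly $d$ factors all of whose entries are homogeneous linear forms. The degenerate low cases ($d\le 1$, or $w\le 1$, i.e.\ $f$ zero or linear) should be matched directly against the conventions $\w(0)=0$ and $\w(\ell)=1$ for a nonzero linear form $\ell$.
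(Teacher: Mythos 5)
Your proposal is correct, and both inequalities are handled soundly; the substance of the hard direction is the same as the paper's (eliminating constant edges without increasing the number of vertices in any internal layer), but the mechanics differ. The paper proceeds locally and iteratively: it repeatedly picks a vertex with no incoming constant edge but an outgoing constant edge $(v,w)$ labeled $\alpha$, deletes that edge, and reroutes each incoming edge $(u,v)$ as $(u,w)$ with label $\alpha\beta(u,v)$ (merging parallel edges by adding labels), maintaining the computed polynomial as an invariant until no constant edges remain. You instead collapse all constant edges at once via the transfer-matrix identity $f=e_s^{\transpose}B_0M_0B_1\cdots M_{d-1}B_de_t$ with $B_i=(I-C_i)^{-1}=I+C_i+\cdots$ a finite sum by acyclicity of the intra-layer constant edges (which the paper's definition guarantees), and then regroup into $d$ factors with homogeneous linear entries. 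Your version buys a closed-form expression and makes the width bound immediate from the dimensions of the $P_k$; the paper's version stays purely graph-theoretic and avoids any matrix bookkeeping. One small point worth making explicit in your write-up: the regrouping step silently uses $d\ge 2$ (so that $P_1$ and $P_d$ are distinct factors absorbing $B_0$ and $B_d$ respectively); you flag the degenerate cases at the end, which suffices.
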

\begin{proof}
Obviously, $\wh(f)\leq\w(f)$.
The other direction requires a classical construction.
Figure~\ref{fig:compresseduncompressed} gives an illustration.
If we have edges in an ABP \emph{within a vertex layer},
then these edges are labeled by elements from~$R$,
and the set of these edges is not allowed to contain a directed cycle.
These edges are called \emph{constant edges}.
Note that even though an ABP can have several vertices in vertex layer $0$ and vertex layer $d$,
there is only a single source and a single sink, and $\wh(f)$ is defined 
as the maximum number of vertices in a vertex layer $1 \leq k < d$,
i.e., we ignore vertex layer $0$ and $d$.
To an ABP we assign a pABP via the following process.
We choose a vertex $v$ that has no incoming constant edges (which must exist due to the absence of directed cycles), but at least one outgoing constant edge $(v,w)$ with label $\alpha\in R$.
We remove $(v,w)$, and for each edge $(u,v)$ with label $\beta(u,v)$ we add a new edge $(u,w)$ with label $\alpha\beta(u,v)$. If an edge $(u,w)$ already existed, then instead of having two parallel edges, we keep one edge whose label is the sum of the two labels.
If $v$ is the source vertex, then $v$ and $u$ are merged into one new source vertex, keeping all their outgoing edges,
and as before parallel edges are merged into single edges by adding their edge labels.
The whole process removes at least 1 constant edge, and does not change the set of vertices in any layer besides layer 0.
We continue this process until all constant edges are removed.
At the end, we delete all non-source vertices in vertex layer 0,
and all non-sink vertices in vertex layer $d$, so that we end up with a pABP.
Let $G_1,G_2,\ldots,G_j$ denote the list of ABPs arising between the steps of this process.
The invariant that is preserved between the steps is
\[
\sum_{p\in P(G_1,s,t)} \beta(p) \ = \ \sum_{p\in P(G_2,s,t)} \beta(p) \ = \ \cdots \ = \ \sum_{p\in P(G_j,s,t)} \beta(p),
\]
which proves the claim.
\end{proof}

\subsection{Degree and width}
\label{subsec:appendixduality}
\begin{proof}[Proof of \eqref{eq:duality}]
The statements for the degree are obvious, but
note that the second inequality can indeed be strict, because $R$ can have zero divisors.
The statements for the width are classical and can be proved by explicit constructions that appear for example in~\cite{Val79}.
The width-of-sum statement is obtained by taking a pABP $G_f$ for $f$ and a pABP $G_h$ for $h$ and get a pABP for the sum $f+h$ by identifying the two sources with each other and identifying the two sinks with each other.
The width-of-product statement is obtained by taking a pABP $G_f$ for $f$ and a pABP $G_h$ for $h$ and get a pABP for the product $fh$ by identifying the sink in $G_f$ with the source in $G_h$.
Both constructions are illustrated in Figure~\ref{fig:sumandproduct}.
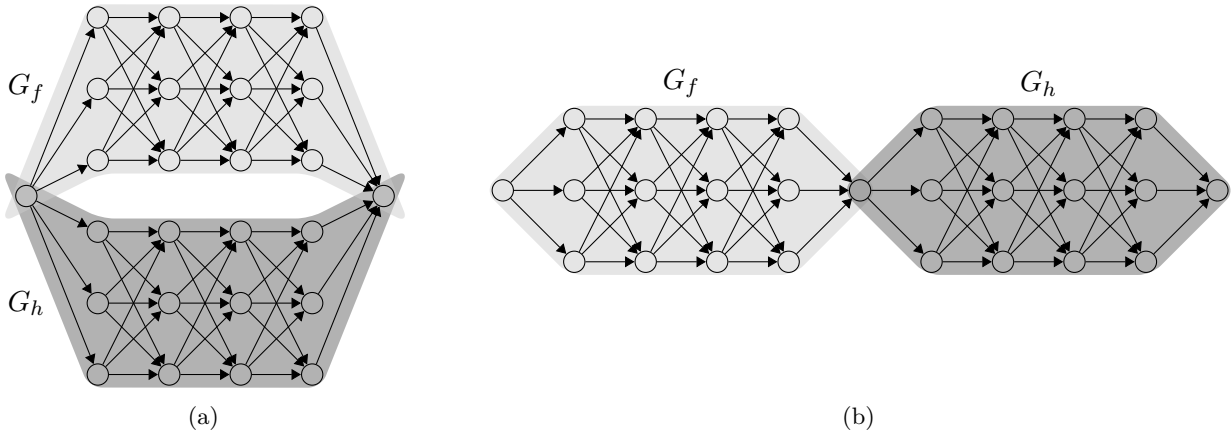
\begin{figure}[htbp]
    \centering
    \begin{subfigure}[b]{0.35\textwidth}
        \centering
\begin{tikzpicture}[scale=0.95]
\draw[draw=none,preaction={contour=-5pt,rounded corners=5,fill=black!60!white,opacity=0.5}] (0,1.5)--(1,-1)--(4,-1)--(5,1.5)--(4,1)--(1,1)--cycle;
\draw[draw=none,preaction={contour=-5pt,rounded corners=5,fill=black!20!white,opacity=0.5}] (0,1.5)--(1,2)--(4,2)--(5,1.5)--(4,4)--(1,4)--cycle;
\node[draw=none] at (0,3) {$G_f$};
\node[draw=none] at (0,0) {$G_h$};
\node[draw,circle,inner sep=0.1cm] (s) at (0,1.5) {};
\node[draw,circle,inner sep=0.1cm] (11) at (1,-1) {};
\node[draw,circle,inner sep=0.1cm] (12) at (1,0) {};
\node[draw,circle,inner sep=0.1cm] (13) at (1,1) {};
\node[draw,circle,inner sep=0.1cm] (21) at (2,-1) {};
\node[draw,circle,inner sep=0.1cm] (22) at (2,0) {};
\node[draw,circle,inner sep=0.1cm] (23) at (2,1) {};
\node[draw,circle,inner sep=0.1cm] (31) at (3,-1) {};
\node[draw,circle,inner sep=0.1cm] (32) at (3,0) {};
\node[draw,circle,inner sep=0.1cm] (33) at (3,1) {};
\node[draw,circle,inner sep=0.1cm] (41) at (4,-1) {};
\node[draw,circle,inner sep=0.1cm] (42) at (4,0) {};
\node[draw,circle,inner sep=0.1cm] (43) at (4,1) {};
\node[draw,circle,inner sep=0.1cm] (t) at (5,1.5) {};
\node[draw,circle,inner sep=0.1cm] (p11) at (1,2) {};
\node[draw,circle,inner sep=0.1cm] (p12) at (1,3) {};
\node[draw,circle,inner sep=0.1cm] (p13) at (1,4) {};
\node[draw,circle,inner sep=0.1cm] (p21) at (2,2) {};
\node[draw,circle,inner sep=0.1cm] (p22) at (2,3) {};
\node[draw,circle,inner sep=0.1cm] (p23) at (2,4) {};
\node[draw,circle,inner sep=0.1cm] (p31) at (3,2) {};
\node[draw,circle,inner sep=0.1cm] (p32) at (3,3) {};
\node[draw,circle,inner sep=0.1cm] (p33) at (3,4) {};
\node[draw,circle,inner sep=0.1cm] (p41) at (4,2) {};
\node[draw,circle,inner sep=0.1cm] (p42) at (4,3) {};
\node[draw,circle,inner sep=0.1cm] (p43) at (4,4) {};
\draw[-Triangle] (s) -- (11);
\draw[-Triangle] (s) -- (12);
\draw[-Triangle] (s) -- (13);
\draw[-Triangle] (11) -- (21);
\draw[-Triangle] (11) -- (22);
\draw[-Triangle] (11) -- (23);
\draw[-Triangle] (12) -- (21);
\draw[-Triangle] (12) -- (22);
\draw[-Triangle] (12) -- (23);
\draw[-Triangle] (13) -- (21);
\draw[-Triangle] (13) -- (22);
\draw[-Triangle] (13) -- (23);
\draw[-Triangle] (21) -- (31);
\draw[-Triangle] (21) -- (32);
\draw[-Triangle] (21) -- (33);
\draw[-Triangle] (22) -- (31);
\draw[-Triangle] (22) -- (32);
\draw[-Triangle] (22) -- (33);
\draw[-Triangle] (23) -- (31);
\draw[-Triangle] (23) -- (32);
\draw[-Triangle] (23) -- (33);
\draw[-Triangle] (31) -- (41);
\draw[-Triangle] (31) -- (42);
\draw[-Triangle] (31) -- (43);
\draw[-Triangle] (32) -- (41);
\draw[-Triangle] (32) -- (42);
\draw[-Triangle] (32) -- (43);
\draw[-Triangle] (33) -- (41);
\draw[-Triangle] (33) -- (42);
\draw[-Triangle] (33) -- (43);
\draw[-Triangle] (41) -- (t);
\draw[-Triangle] (42) -- (t);
\draw[-Triangle] (43) -- (t);
\draw[-Triangle] (s) -- (p11);
\draw[-Triangle] (s) -- (p12);
\draw[-Triangle] (s) -- (p13);
\draw[-Triangle] (p11) -- (p21);
\draw[-Triangle] (p11) -- (p22);
\draw[-Triangle] (p11) -- (p23);
\draw[-Triangle] (p12) -- (p21);
\draw[-Triangle] (p12) -- (p22);
\draw[-Triangle] (p12) -- (p23);
\draw[-Triangle] (p13) -- (p21);
\draw[-Triangle] (p13) -- (p22);
\draw[-Triangle] (p13) -- (p23);
\draw[-Triangle] (p21) -- (p31);
\draw[-Triangle] (p21) -- (p32);
\draw[-Triangle] (p21) -- (p33);
\draw[-Triangle] (p22) -- (p31);
\draw[-Triangle] (p22) -- (p32);
\draw[-Triangle] (p22) -- (p33);
\draw[-Triangle] (p23) -- (p31);
\draw[-Triangle] (p23) -- (p32);
\draw[-Triangle] (p23) -- (p33);
\draw[-Triangle] (p31) -- (p41);
\draw[-Triangle] (p31) -- (p42);
\draw[-Triangle] (p31) -- (p43);
\draw[-Triangle] (p32) -- (p41);
\draw[-Triangle] (p32) -- (p42);
\draw[-Triangle] (p32) -- (p43);
\draw[-Triangle] (p33) -- (p41);
\draw[-Triangle] (p33) -- (p42);
\draw[-Triangle] (p33) -- (p43);
\draw[-Triangle] (p41) -- (t);
\draw[-Triangle] (p42) -- (t);
\draw[-Triangle] (p43) -- (t);
\end{tikzpicture}
        \caption{}
    \end{subfigure}
    \hfill
    \begin{subfigure}[b]{0.6\textwidth}
        \centering
\raisebox{1.5cm}{\begin{tikzpicture}[scale=0.95]
\draw[draw=none,preaction={contour=-5pt,rounded corners=5,fill=black!20!white,opacity=0.5}] (0,0)--(1,-1)--(4,-1)--(5,0)--(4,1)--(1,1)--cycle;
\draw[draw=none,preaction={contour=-5pt,rounded corners=5,fill=black!60!white,opacity=0.5}] (5,0)--(6,-1)--(9,-1)--(10,0)--(9,1)--(6,1)--cycle;
\node[draw=none] at (2.5,1.5) {$G_f$};
\node[draw=none] at (7.5,1.5) {$G_h$};
\node[draw,circle,inner sep=0.1cm] (s) at (0,0) {};
\node[draw,circle,inner sep=0.1cm] (11) at (1,-1) {};
\node[draw,circle,inner sep=0.1cm] (12) at (1,0) {};
\node[draw,circle,inner sep=0.1cm] (13) at (1,1) {};
\node[draw,circle,inner sep=0.1cm] (21) at (2,-1) {};
\node[draw,circle,inner sep=0.1cm] (22) at (2,0) {};
\node[draw,circle,inner sep=0.1cm] (23) at (2,1) {};
\node[draw,circle,inner sep=0.1cm] (31) at (3,-1) {};
\node[draw,circle,inner sep=0.1cm] (32) at (3,0) {};
\node[draw,circle,inner sep=0.1cm] (33) at (3,1) {};
\node[draw,circle,inner sep=0.1cm] (41) at (4,-1) {};
\node[draw,circle,inner sep=0.1cm] (42) at (4,0) {};
\node[draw,circle,inner sep=0.1cm] (43) at (4,1) {};
\node[draw,circle,inner sep=0.1cm] (t) at (5,0) {};
\node[draw,circle,inner sep=0.1cm] (y11) at (6,-1) {};
\node[draw,circle,inner sep=0.1cm] (y12) at (6,0) {};
\node[draw,circle,inner sep=0.1cm] (y13) at (6,1) {};
\node[draw,circle,inner sep=0.1cm] (y21) at (7,-1) {};
\node[draw,circle,inner sep=0.1cm] (y22) at (7,0) {};
\node[draw,circle,inner sep=0.1cm] (y23) at (7,1) {};
\node[draw,circle,inner sep=0.1cm] (y31) at (8,-1) {};
\node[draw,circle,inner sep=0.1cm] (y32) at (8,0) {};
\node[draw,circle,inner sep=0.1cm] (y33) at (8,1) {};
\node[draw,circle,inner sep=0.1cm] (y41) at (9,-1) {};
\node[draw,circle,inner sep=0.1cm] (y42) at (9,0) {};
\node[draw,circle,inner sep=0.1cm] (y43) at (9,1) {};
\node[draw,circle,inner sep=0.1cm] (yt) at (10,0) {};
\draw[-Triangle] (s) -- (11);
\draw[-Triangle] (s) -- (12);
\draw[-Triangle] (s) -- (13);
\draw[-Triangle] (11) -- (21);
\draw[-Triangle] (11) -- (22);
\draw[-Triangle] (11) -- (23);
\draw[-Triangle] (12) -- (21);
\draw[-Triangle] (12) -- (22);
\draw[-Triangle] (12) -- (23);
\draw[-Triangle] (13) -- (21);
\draw[-Triangle] (13) -- (22);
\draw[-Triangle] (13) -- (23);
\draw[-Triangle] (21) -- (31);
\draw[-Triangle] (21) -- (32);
\draw[-Triangle] (21) -- (33);
\draw[-Triangle] (22) -- (31);
\draw[-Triangle] (22) -- (32);
\draw[-Triangle] (22) -- (33);
\draw[-Triangle] (23) -- (31);
\draw[-Triangle] (23) -- (32);
\draw[-Triangle] (23) -- (33);
\draw[-Triangle] (31) -- (41);
\draw[-Triangle] (31) -- (42);
\draw[-Triangle] (31) -- (43);
\draw[-Triangle] (32) -- (41);
\draw[-Triangle] (32) -- (42);
\draw[-Triangle] (32) -- (43);
\draw[-Triangle] (33) -- (41);
\draw[-Triangle] (33) -- (42);
\draw[-Triangle] (33) -- (43);
\draw[-Triangle] (41) -- (t);
\draw[-Triangle] (42) -- (t);
\draw[-Triangle] (43) -- (t);
\draw[-Triangle] (t) -- (y11);
\draw[-Triangle] (t) -- (y12);
\draw[-Triangle] (t) -- (y13);
\draw[-Triangle] (y11) -- (y21);
\draw[-Triangle] (y11) -- (y22);
\draw[-Triangle] (y11) -- (y23);
\draw[-Triangle] (y12) -- (y21);
\draw[-Triangle] (y12) -- (y22);
\draw[-Triangle] (y12) -- (y23);
\draw[-Triangle] (y13) -- (y21);
\draw[-Triangle] (y13) -- (y22);
\draw[-Triangle] (y13) -- (y23);
\draw[-Triangle] (y21) -- (y31);
\draw[-Triangle] (y21) -- (y32);
\draw[-Triangle] (y21) -- (y33);
\draw[-Triangle] (y22) -- (y31);
\draw[-Triangle] (y22) -- (y32);
\draw[-Triangle] (y22) -- (y33);
\draw[-Triangle] (y23) -- (y31);
\draw[-Triangle] (y23) -- (y32);
\draw[-Triangle] (y23) -- (y33);
\draw[-Triangle] (y31) -- (y41);
\draw[-Triangle] (y31) -- (y42);
\draw[-Triangle] (y31) -- (y43);
\draw[-Triangle] (y32) -- (y41);
\draw[-Triangle] (y32) -- (y42);
\draw[-Triangle] (y32) -- (y43);
\draw[-Triangle] (y33) -- (y41);
\draw[-Triangle] (y33) -- (y42);
\draw[-Triangle] (y33) -- (y43);
\draw[-Triangle] (y41) -- (yt);
\draw[-Triangle] (y42) -- (yt);
\draw[-Triangle] (y43) -- (yt);
\end{tikzpicture}}
        \caption{}
    \end{subfigure}
    \caption{The constructions in the proof of \eqref{eq:duality}. Addition is on the left, multiplication is on the right.}
    \label{fig:sumandproduct}
\end{figure}\end{proof}

\subsection{Width compared to affine ABPs}
An affine algebraic branching program (aABP) is a
directed acyclic graph with a single source $s$ and single sink~$t$.
The edges are labeled with (not necessarily homogeneous) polynomials of degree $\leq 1$.
The size of an aABP is the number of its vertices (not counting source or sink).
An aABP computes a polynomial via the sum of the products of the edge labels,
where the sum is over all paths from the source to the sink.
Note that it is not required that all $s$-$t$-paths have the same length.
We write $\asize(f)$ for the smallest size of an aABP computing $f$.
For $f\in R[\mathbf x]_d$, we clearly have
\begin{equation}
\label{eq:asizeleqw}
\asize(f)\leq d \cdot \w(f).
\end{equation}
The following proposition is classical and shows the other direction.
\begin{proposition}[Homogenization]
\label{pro:homogenization}
Let $h$ be a polynomial,
and let $f$ be a homogeneous component of $h$.
Then $\w(f) \leq \asize(h)$.
In particular, $\w(f) \leq \asize(f)$.
\end{proposition}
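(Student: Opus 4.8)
The plan is to run the classical degree-tracking homogenization of an affine ABP, and then do a small amount of pruning so that the resulting ABP meets the stated size bound exactly rather than up to an additive constant. We may assume $f\neq 0$ and $d:=\deg f\geq 1$. Let $G$ be an aABP computing $h$ with exactly $\asize(h)$ vertices besides its source $s$ and sink $t$, so $|V(G)|=\asize(h)+2$, and write each edge label as $\beta_G(e)=c_e+\ell_e$ with $c_e\in R$ a constant and $\ell_e\in R[\mathbf x]_1$ its homogeneous linear part. Since $G$ is acyclic, every walk in $G$ is a path; in particular there is no positive-length walk from $s$ to $s$, and $t$ has no outgoing edges.

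First I would build a layered digraph $\hat G$ on the vertex set $V(G)\times\{0,1,\ldots,d\}$, with $(v,j)$ placed in layer $j$. For each edge $(u,v)\in E(G)$ and each $j$ I add a constant edge $(u,j)\to(v,j)$ labeled $c_{(u,v)}$, and a linear edge $(u,j)\to(v,j+1)$ labeled $\ell_{(u,v)}$ (for $j<d$), omitting any edge whose label is $0$. This is a legal ABP in the sense of \S\ref{sec:intro}: the layer index stays fixed along constant edges and increases by one along linear edges, the within-layer constant edges inherit acyclicity from $G$, the constant labels lie in $R$, and the linear labels lie in $R[\mathbf x]_1$. I declare $(s,0)$ the source and $(t,d)$ the sink.

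Next I would check that $\hat G$ computes $f$. Any $(s,0)$-$(t,d)$-path in $\hat G$ projects onto a walk in $G$ from $s$ to $t$, which, $G$ being acyclic, is a simple path $\pi$; choosing the $\hat G$-path amounts to choosing $\pi$ together with the size-$d$ subset $T$ of its edges that are traversed by a linear step (these are exactly the edges at which the layer index advances, and acyclicity forces every edge of $\pi$ to be used by at most one step). The weight of that $\hat G$-path is $\prod_{e\in T}\ell_e\cdot\prod_{e\in E(\pi)\setminus T}c_e$, so the value computed by $\hat G$ at $(t,d)$ is $\sum_\pi\sum_{T\subseteq E(\pi),\,|T|=d}\prod_{e\in T}\ell_e\prod_{e\in E(\pi)\setminus T}c_e$, which is exactly the homogeneous degree-$d$ component of $h=\sum_\pi\prod_{e\in E(\pi)}(c_e+\ell_e)$, namely $f$.

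Finally I would prune $\hat G$: delete every vertex $(s,j)$ with $j\geq 1$, which is unreachable from $(s,0)$ because $G$ has no positive-length $s$-$s$ walk, and every vertex $(t,j)$ with $j\leq d-1$, from which $(t,d)$ is unreachable because $t$ has no outgoing edge in $G$. This removes neither the source nor the sink and does not change the computed polynomial. Afterwards each of the layers $1,\ldots,d-1$ contains only vertices $(v,j)$ with $v\in V(G)\setminus\{s,t\}$, hence at most $\asize(h)$ of them, so $\wh(f)\leq\asize(h)$; Proposition~\ref{pro:wwh} then gives $\w(f)=\wh(f)\leq\asize(h)$, and taking $h=f$ gives the ``in particular'' clause. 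Everything here is routine except this last pruning step, which is the only place one has to be careful to get $\asize(h)$ rather than the naive $\asize(h)+2$; that is the main (minor) obstacle.
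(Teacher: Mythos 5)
Your construction is exactly the one in the paper: replace each vertex $v$ of the aABP by copies $(v,0),\ldots,(v,d)$, route the constant part of each edge label within a layer and the linear part to the next layer, and delete the superfluous copies of the source and sink so that layers $1,\ldots,d-1$ each contain at most $\asize(h)$ vertices. The proposal is correct, and your extra care in verifying the computed polynomial and the pruning step only makes explicit what the paper leaves implicit.
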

\begin{proof}
Let $d:=\deg(f)$.
We take a minimal size aABP $G$ that computes $h$
and we replace every vertex $v$ with $d+1$ many vertices $v_0,\ldots,v_d$.
We will set the edges so that if a vertex $v$ computes some polynomial $F$, then $v_i$ will compute the $i$-th homogeneous component of $F$, for $0\leq i \leq d$,
and we will not compute higher degree components.
If $\ell + \alpha$ is an edge label in $G$ from $v$ to $w$
with $\ell$ homogeneous linear and $\alpha$ constant,
then we add edges with label $\ell$ from each $v_i$ to $w_{i+1}$,
and we add edges with label $\alpha$ from each $v_i$ to $w_i$.
For the source $s$ we delete all $s_i$ with $i>0$,
and for the sink $t$ we delete all $t_i$ with $i<d$,
so that we have a unique source and sink.
This finishes the construction.
The result is an ABP that has at most $\asize(h)$ many vertices
in each vertex layer but the first and last, in other words, the width of the ABP is at most $\asize(h)$.
Hence $\w(h)\leq\asize(h)$.
\end{proof}

\subsection{Width compared to determinantal complexity}

\begin{proposition}[\cite{Val79}]
\label{pro:dcleqw}
Let $f\in R[\mathbf x]_d$. We have $\dc(f)\leq d \cdot \w(f)$.
\end{proposition}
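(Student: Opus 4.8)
The plan is to turn a width-$n$ pABP for $f$ (with $n:=\w(f)$) into a square matrix with affine linear entries whose determinant is $f$, of size at most $(d-1)n+1\le dn$. We may assume $f\neq 0$, so $n\ge 1$. If $n=1$, then $f=\ell_1\cdots\ell_d$ is a product of $d$ homogeneous linear forms (read off the $1\times 1$ matrices of the pABP), hence $f=\det(\diag(\ell_1,\dots,\ell_d))$ and $\dc(f)\le d=d\,\w(f)$. So from now on assume $n\ge 2$; since every homogeneous degree $1$ polynomial has width $1$, this forces $d\ge 2$.

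Take a pABP of width $n$ computing $f$: a $d$-tuple of $n\times n$ matrices of homogeneous linear forms whose product has bottom-right entry $f$. As explained in \S\ref{sec:appendix}, only the last row of the first matrix and the last column of the last matrix matter, so I view the pABP as a labeled layered digraph $G$: a single source $s$ in layer $0$, a single sink $t$ in layer $d$, exactly $n$ vertices in each layer $1,\dots,d-1$, edges only between consecutive layers labeled by homogeneous linear forms, with $f=\sum_{p}\beta(p)$ summed over all $s$-$t$-paths $p$ (each of length $d$). Now I identify $s$ and $t$ into a single vertex $r$, obtaining a digraph $G'$ on $N:=(d-1)n+1$ vertices. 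Since $d\ge 2$, layers $0$ and $d$ are non-adjacent, so $G'$ has no loop at $r$; in fact $G'$ has no loops at all. Let $A$ be the $N\times N$ weighted adjacency matrix of $G'$ (homogeneous linear entries, zero diagonal), let $E_{r,r}$ denote the matrix unit at position $(r,r)$, and set
\[
M\ :=\ I_N-A-E_{r,r},
\]
so that $M$ has all diagonal entries $1$ except a single $0$ in position $(r,r)$, and off-diagonal entry $-A_{u,v}$.

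I would then compute $\det M$ from the Leibniz expansion, grouping permutations by their cycle type (equivalently, reading $\det M$ as a signed sum over vertex-disjoint cycle collections of $G'$). Because $M_{r,r}=0$, every permutation contributing a nonzero term moves $r$; following it forward from $r$ along edges of $G'$ back to $r$ shows that $r$ lies on a single cycle which, unwound through the identification, is exactly an $s$-$t$-path $p$ of $G$, and that cycle has length $d$. The vertices off that cycle span a subgraph of the original layered DAG, which is acyclic, so the permutation fixes each of them, contributing the diagonal factor $1$. Hence the contributing permutations are in bijection with the $s$-$t$-paths $p$, and a short sign count ($\sgn$ of a $d$-cycle is $(-1)^{d-1}$, multiplied by $(-1)^{d}$ coming from the $d$ off-diagonal factors $-A_e$ along the cycle) gives $\det M=-\sum_p\beta(p)=-f$. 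Negating, say, the first row of $M$ produces a matrix $M''$ with entries in $R[\mathbf x]_0\oplus R[\mathbf x]_1$ and $\det M''=f$. Therefore $\dc(f)\le N=(d-1)\w(f)+1\le d\,\w(f)$, using $\w(f)\ge 1$.

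The step I expect to be the main obstacle is the determinant evaluation: carefully justifying that no further cycle collections of $G'$ survive — i.e., that every nonzero term of $\det M$ consists of exactly one $s$-$t$-cycle through $r$ together with fixed points, which rests on the acyclicity of $G'$ after deleting that cycle — and pinning down the global sign so that it comes out as $-1$ per path regardless of the parities of $d$ and $n$.
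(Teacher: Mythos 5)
Your proposal is correct and is essentially the paper's proof: identify source and sink, take the (signed) adjacency-type matrix of the resulting graph on $(d-1)n+1$ vertices, and observe that the only surviving cycle covers are single length-$d$ cycles through the merged vertex corresponding to $s$-$t$-paths. The paper writes the matrix with loops labeled $1$ and off-diagonal entries $+A_{u,v}$ (fixing the sign by a row swap when $d$ is even) while you use $I_N-A-E_{r,r}$ and a row negation, but this is only a cosmetic difference.
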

\begin{proof}
Let $G$ be a width $n$ ABP that computes $f$.
Then $G$ has at most $(d-1)n+2$ many vertices.
We identify the source and the sink,
and to every other vertex we add a loop with label 1.
Let $G'$ denote the resulting graph, and let $M$ be its adjacency matrix.
Note that $G'$ has $(d-1)n+1\leq dn$ many vertices.
Since $\det(M)$ is the sum over all signed cycle covers of $G'$ (see \eqref{eq:detascyccov}),
we observe that the set of cycle covers in $G'$ is in weight-preserving bijection to the set of $s$-$t$-paths in $G$,
where the bijection is just adding/removing all loops.
Moreover, all $s$-$t$-paths have length $d$, hence $\det(M)=(-1)^{d+1} f$.
If $d$ is even, then we swap two rows of $M$ to obtain $M'$ with $\det(M')=f$.
\end{proof}

Almost 45 years lie between Proposition~\ref{pro:dcleqw} and its recent counterpart Proposition~\ref{pro:CKV}.

\begin{proposition}[Width variant of \protect{\cite[Thm~4.1]{CKV24}}]
\label{pro:CKV}
Let $R$ be a field, and $f\in R[\mathbf x]_d$. Then $\w(f) \leq d^4 \cdot \dc(f)$.
\end{proposition}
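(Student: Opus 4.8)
Write $s=\dc(f)$, so that $f=\det(A)$ for an $s\times s$ matrix $A$ with entries in $R[\mathbf x]_0\oplus R[\mathbf x]_1$; write $A=A_0+B$ with $A_0\in R^{s\times s}$ constant and $B$ an $s\times s$ matrix of homogeneous linear forms. The plan is to combine this representation with the construction of \cite[Thm~4.1]{CKV24} and then translate the resulting object into the width measure using the homogenization tools of \S\ref{sec:appendix}. The structural fact that drives everything is that the low degree $d$ is ``visible'' inside the large representation: introducing a fresh scalar indeterminate $z$ and arguing exactly as in the proof of Proposition~\ref{pro:wdetpoly} (the degree-$d$ part of $\det(A_0+B)$ is all of $f$), one has
\[
\det(zA_0+B)\ =\ z^{\,s-d}\,f(\mathbf x)\qquad\text{in }R[\mathbf x][z],
\]
so the matrix pencil $zA_0+B$ carries a zero of order $s-d$ at $z=0$ (in particular $A_0$ is singular).

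\textbf{Using \cite[Thm~4.1]{CKV24}.} First I would invoke \cite[Thm~4.1]{CKV24}: starting from $f=\det(A)$, it peels the factor $z^{\,s-d}$ off the homogenized pencil by a suitable pencil reduction (a change of basis over $R$ bringing $zA_0+B$ into a normal form adapted to the order-$(s-d)$ vanishing), and produces a branching program for $f$ of length $d$ whose width is governed by $\poly(d)\cdot s$ rather than by $\poly(s)$. I would reproduce that construction, keeping careful track of the point that the hypothesis ``$R$ is a field'' is used precisely to perform the pencil reduction; this is why $(\ast)$ in \eqref{eq:wdc} is only claimed over fields.

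\textbf{Translation to width.} The output of the previous step is, after the usual cleanup, a homogeneous branching program in the sense of \S\ref{sec:intro}, or at worst an affine branching program of size $\poly(d)\cdot s$. In the latter case I would pass to width via Proposition~\ref{pro:homogenization}, $\w(f)\le\asize(f)$, together with \eqref{eq:asizeleqw} to relate the two size notions and Proposition~\ref{pro:wwh} to absorb any constant edges without increasing the width in layers $1,\dots,d-1$. Bookkeeping the polynomial factors introduced by the pencil reduction, by homogenization, and by clearing constant edges, the width comes out at most $d^4\cdot s=d^4\cdot\dc(f)$.

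\textbf{Main obstacle.} The hard part lives entirely inside \cite[Thm~4.1]{CKV24}: obtaining a branching program whose size depends only \emph{linearly} on $s=\dc(f)$. The naive route --- substitute the affine entries of $A$ into a width-$O(s^2)$ ABP for $\det_s$ (e.g.\ Theorem~\ref{thm:wdetnsquare}) and homogenize via Proposition~\ref{pro:homogenization} --- only yields $\w(f)\in O(s^3)$ with no dependence on $d$, which is far worse than $d^4\dc(f)$ once $\dc(f)\gg d^4$. Exploiting the order-$(s-d)$ vanishing of $\det(zA_0+B)$ at $z=0$ to replace the generic $s\times s$ determinant machinery by a $\poly(d)\cdot s$-sized one, while staying valid over an arbitrary field, is the crux; a secondary, purely clerical difficulty is matching the exponent to exactly $d^4$.
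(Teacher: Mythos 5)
Your setup and your diagnosis of where the difficulty lies are both accurate: the identity $\det(zA_0+B)=z^{\,s-d}f$ is correct (it shows in particular that the constant part $A_0$ is singular), the naive substitution into a $\det_s$ ABP indeed only gives $O(s^3)$, and the field hypothesis is used exactly where you say it is. But the proposal then defers the entire core of the argument to ``reproduce the construction of \cite[Thm~4.1]{CKV24}'' and explicitly names that construction as the unresolved crux. Since the proposition \emph{is} that construction (restated for width), this is a genuine gap: you have restated the problem, not solved it. What is missing is the actual mechanism by which the order-$(s-d)$ vanishing is exploited.

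Concretely, the paper proceeds as follows. Because $M_0$ is singular, over a field one finds $g,h\in\GL_s$ with $\det(gh)=1$ and $gM_0h=\left(\begin{smallmatrix}0&0\\0&I\end{smallmatrix}\right)$; writing $gM_1h=\left(\begin{smallmatrix}A&B\\C&-D\end{smallmatrix}\right)$, the Schur complement factorization gives $\det(gMh)=\det(A-BNC)\,\det(I-D)$ with $N=(I-D)^{-1}=\sum_{i\ge0}D^i$ a formal power series. The lowest-degree homogeneous part of this product is the determinant of the top-left block, and since $f$ is homogeneous of degree $d$ this forces $A-BNC$ to be a $d\times d$ matrix --- this is the step that collapses the ``$s\times s$ determinant machinery'' to size $d$, and it is nowhere in your proposal. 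One then truncates the Neumann series at $\sum_{i=0}^{d-2}D^i$ (higher terms only contribute degree $>d$), observes that each entry of $W=A-B\bigl(\sum_{i=0}^{d-2}D^i\bigr)C$ has an affine ABP of size at most $s(d-1)$, substitutes these into the $O(d^3)$-vertex ABP for $\det_d$ from Theorem~\ref{thm:wdetnsquare} (whose edge labels are single variables up to sign, which is why the substitution is clean), and finally extracts the degree-$d$ component via Proposition~\ref{pro:homogenization} to land at $\w(f)\le O(sd^4)$. Without the equivalence normalization of $M_0$, the Schur complement, and the $d\times d$ size of the reduced block, the proof does not go through.
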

\begin{proof}
Let $M$ be an $s\times s$ matrix whose entries are in $R[\mathbf x]_{\leq d}$
such that $\det(M)=f$.
Let $M = M_1 + M_0$,
where $M_0$ is a matrix of constants, and $M_1$ is a matrix of homogeneous linear polynomials.
The matrix $M_0$ does not have full rank, because $\det(M)$ has constant part zero.
Since $R$ is a field,
we find $g,h\in\GL_s$ with $\det(gh)=1$ such that $g M_0 h = \begin{pmatrix}
0 & 0
\\
0 & I
\end{pmatrix}$.
Define $A,B,C,D$ via
\[
g M_1 h \ = \ 
\begin{pmatrix}
A & B
\\
C & -D
\end{pmatrix},
\]
where $D$ and $I$ have the same size, so that
\[
g M h = \begin{pmatrix}
A & B
\\
C & I-D
\end{pmatrix}.
\]
We have $N := (I-D)^{-1}=\sum_{i\geq 0}D^i$ in the ring of formal power series.
We have the Schur complement factorization:
\[
\begin{pmatrix}
A & B
\\
C & I-D
\end{pmatrix}
=
\begin{pmatrix}
I&B N\\
0&I
\end{pmatrix}
\begin{pmatrix}
A-BNC &0\\
0&I-D
\end{pmatrix}
\begin{pmatrix}
I & 0\\
DC & I
\end{pmatrix}.
\]
Since the first and third matrix in the factorization have determinant 1, it follows that
\[
\det\begin{pmatrix}
A & B
\\
C & I-D
\end{pmatrix}
=
\det
\begin{pmatrix}
A-BNC &0\\
0&I-D
\end{pmatrix} = \det(A-BNC)\,\det(I-D).
\]
The lowest nonzero homogeneous part of this determinant is $\det(A-BNC)$.
Since $f$ is homogeneous of degree $d$, it follows that $A-BNC$ is a $d\times d$ matrix.
Moreover, since $f$ is homogeneous of degree~$d$,
$f$ equals the homogeneous degree $d$ part of
$\det(A-B(\sum_{i=0}^{d-2}D^i)C)$,
because for $i>d-2$ the monomials of the matrix $BD^i C$ are of degree $>d$.

An aABP for an entry of $W = A-B(\sum_{i=0}^{d-2}D^i)C$ has size at most $s(d-1)$.
We take the ABP for $\det_d$ from Theorem~\ref{thm:wdetnsquare}, which only has variables or their negations or constants as edge labels.
We replace each variable $x_{i,j}$ with the aABP for $W_{i,j}$,
and each negated variable $-x_{i,j}$ with the aABP for $-W_{i,j}$:
Remove the edge $(a,b)$ that has label $x_{i,j}$, then add a copy of the aABP for $W_{i,j}$
and identify the source with $a$, and the sink with~$b$.
The result is an aABP of size at most $O(sd^4)$.
Since $\w(f)\leq \asize(f)$ due to Proposition~\ref{pro:homogenization}, it follows $\w(f)\leq O(sd^4)$.
\end{proof}
No purely combinatorial proof of Proposition~\ref{pro:CKV} is known.
Such an algorithm could be helpful in resolving the following open problem.
\begin{openproblem}
Does there exist a function $\xi:\IN\to\IN$
such that for every commutative ring $R$ and every $f\in R[\mathbf x]_d$ we have $\w(f)\leq \xi(d) \cdot \dc(f)$\,?
\end{openproblem}

\subsection{Exponents and fixed-parameter tractability}
\label{subsec:exponents}

Recall that $\VBFPT$ consists of those $f_{n,d}$ for which the set $\{\gamma(f_{.,d})\mid d \in \IN\}$ is bounded.
This nice phrasing is possible, because the model of computation is nonuniform, see \cite[footnote~5]{BE19}.
The following proposition shows the connection to the classical phrasing.

\begin{proposition}
\label{pro:VBFPT}
$\{\gamma(f_{.,d})\mid d \in \IN\}$ is bounded
 \ iff \ $\exists \xi:\IN\to\IN,c\in\IN \ \forall n,d: \w(f_{n,d}) \leq \xi(d) \cdot n^c$.
\end{proposition}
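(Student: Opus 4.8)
The statement is a clean logical equivalence between two ways of saying the same thing: that the width of $f_{n,d}$ grows polynomially in $n$ with the polynomial's degree controlled by $d$ alone. My plan is to prove each direction by unwinding the definition $\gamma(f_{.,d}) := \limsup_{n\to\infty}\big(\log_n(\w(f_{n,d}))\big)$ and handling the $\limsup$ carefully — the only genuine subtlety is that a $\limsup$ being finite gives asymptotic control, not control for all $n$, so finitely many small values of $n$ need to be absorbed.

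For the ``$\Leftarrow$'' direction, assume $\w(f_{n,d}) \leq \xi(d)\cdot n^c$ for all $n,d$. Then for each fixed $d$ we have $\log_n(\w(f_{n,d})) \leq \log_n(\xi(d)) + c$, and since $\log_n(\xi(d)) \to 0$ as $n\to\infty$ (with $d$, hence $\xi(d)$, fixed), taking the $\limsup$ over $n$ gives $\gamma(f_{.,d}) \leq c$. Thus the set $\{\gamma(f_{.,d}) \mid d\in\IN\}$ is bounded by $c$. One minor point to note: if $\w(f_{n,d})=0$ for all large $n$ we use the convention making $\log$ of it go to $-\infty$ or simply $0$; in either case the $\limsup$ is $\leq c$, so no harm. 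This direction is essentially immediate.

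For the ``$\Rightarrow$'' direction, assume $\{\gamma(f_{.,d})\mid d\in\IN\}$ is bounded, say by an integer $c$ (if the bound is not an integer, round up). Fix $d$. By definition of $\limsup$, $\gamma(f_{.,d}) \leq c$ means that for every $\varepsilon>0$ there is an $N$ such that $\log_n(\w(f_{n,d})) \leq c+\varepsilon$ for all $n\geq N$; taking $\varepsilon = 1$, there is $N_d$ with $\w(f_{n,d}) \leq n^{c+1}$ for all $n\geq N_d$. For the finitely many $n < N_d$, just define $\xi(d)$ to be the maximum of the (finitely many) finite quantities $\w(f_{n,d})/n^{c+1}$ over $n<N_d$, together with $1$; then $\w(f_{n,d}) \leq \xi(d)\cdot n^{c+1}$ holds for all $n$. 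Setting the exponent in the proposition to $c+1$ finishes this direction. The function $\xi$ is manufactured per-$d$, which is fine since we only need existence of \emph{some} $\xi:\IN\to\IN$, and the nonuniform model (as the surrounding text notes) imposes no computability requirement on $\xi$.

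The main (and only) obstacle is the bookkeeping around the $\limsup$: one must resist the temptation to read ``$\gamma(f_{.,d})\leq c$'' as a statement valid for all $n$, and instead explicitly separate the tail $n\geq N_d$ (controlled by the $\limsup$) from the finite head $n<N_d$ (absorbed into $\xi(d)$). I would also remark in passing — as the paper's footnote and the reference to \cite{BE19} suggest — that this per-$d$ construction of $\xi$ is exactly why the ``$\{\gamma(f_{.,d})\}$ bounded'' phrasing is clean: it packages the fixed-parameter dependence without committing to any particular growth rate of $\xi$.
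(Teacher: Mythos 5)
Your proof is correct and follows essentially the same route as the paper's: the easy direction by noting $\log_n(\xi(d))\to 0$, and the converse by instantiating the $\limsup$ with $\varepsilon=1$ to control the tail $n>N_d$ and absorbing the finitely many small $n$ into $\xi(d)$ (the paper simply takes $\xi(d)=\max\{\w(f_{n,d})\mid n\le n_d\}$, avoiding your division by $n^{c+1}$, which would need a ceiling to keep $\xi$ $\IN$-valued). No substantive difference.
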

\begin{proof}
Given $\xi,c$ with $\forall n,d: \w(f_{n,d}) \leq \xi(d) \cdot n^c$.
Then $\forall n,d:\log_n(\w(f_{n,d})) \leq c + \log_n(\xi(d))$.
For all $d$, we have $\lim_{n\to\infty}\big(c + \log_n(\xi(d))\big) = c$, and hence $\forall d:\gamma(f_{.,d})\leq c$.

Let $c'\geq 0$ such that $\forall d:\gamma(f_{.,d})\leq c'$.
Hence, $\forall d,\varepsilon>0 \, \exists n_{d,\varepsilon} \, \forall n> n_{d,\varepsilon} : \log_n(\w(f_{n,d})) \leq c'+\varepsilon$.
In particular,
$\forall d \, \exists n_d \, \forall n> n_d : \log_n(\w(f_{n,d})) \leq c'+1$.
In other words, $\forall d \, \exists n_d \, \forall n> n_d : \w(f_{n,d})) \leq n^{c'+1}$.
Now, let $\xi(d) := \max\{\w(f_{n,d})) \mid n\leq n_d\}$.
Then
$\forall d : \w(f_{n,d})) \leq \xi(d) \cdot n^{c'+1}$.
The claim follows by setting $c=c'+1$.
Note that the argument works for any $c>c'\geq 1$, which gives different~$\xi$.
\end{proof}

\subsection{Algebraically closed fields}
\label{subsec:algclosed}
In this paper, $R$ is an arbitrary commutative ring.
Not every such ring can be embedded into a field.
For the sake of completeness, in this section we assume that $R$ is an algebraically closed field, and we show how some arguments can be made via Zariski density arguments.

\paragraph{The Cayley--Hamilton Theorem}
The Cayley--Hamilton theorem over algebraically closed fields can be proved as follows.
Note that for an $n\times n$ matrix $A$,
every eigenvalue $\la$ to an eigenvector $v$ satisfies $Av=\la v$,
in other words $(A-\la I_n)v = 0$, hence
$\la$ is a zero of $\det(A-t I_n) = 0$.
Conversely, if $\la$ is a zero of $\det(A-t I_n)$, then
$\det(A-\la I_n)=0$ and hence $A-\la I_n$ is not invertible,
which implies that there exists $v$ with $(A-\la I_n)v=0$,
hence $\la$ is an eigenvalue of $A$.

The set of matrices that satisfy the Cayley--Hamilton theorem is Zariski closed, because
the coefficients of the characteristic polynomial of $A$ depend polynomially on the entries of~$A$.
The set of diagonalizable matrices with distinct eigenvalues is Zariski dense in the space of all $n\times n$ matrices.
Hence, it suffices to prove the Cayley--Hamilton theorem for such matrices.
For such matrices, the characteristic polynomial of $A$ factors:
$\chi_A := \det(A-t I_n) = (t-\la_1)\cdots(t-\la_n)$,
where $\la_i$ are the eigenvalues of~$A$.
Let $v_i$ be eigenvectors of $\la_i$.
Now, for $P = \big(v_1 | \cdots | v_n\big)$, we have $A = P \, \diag(\la_1,\ldots,\la_n) \, P^{-1}$.
Hence, for $p(t)=t^k$ we have $p(A) = P \, \diag(\la_1^k,\ldots,\la_n^k) \, P^{-1}$,
and by linearity it holds for all polynomials $p(t)$ that
$p(A) = P \, \diag(p(\la_1),\ldots,p(\la_n)) \, P^{-1}$.
Since the $\la_i$ are the roots of $\chi_A$, it follows that
$\chi_A(A) = P \, \diag(0,\ldots,0) \, P^{-1} = 0$,
which finishes the proof of the Cayley--Hamilton theorem.
Since $R$ is infinite, by multivariate interpolation the Cayley--Hamilton theorem is also true as an identity of matrices of polynomials.

\paragraph{The matrix Girard--Newton Identities}
We now deduce the matrix Girard--Newton identities via a similar density argument from the classical Girard--Newton identities over algebraically closed fields.
Let $\trp_{n,j} := \tr(X_n^j)$, where $X_n = \big(x_{i,j}\big)$ is the $n \times n$ matrix of variables.
Our task is to prove 
\begin{equation}\label{eq:girardnewtonappendix}
\sum_{j=0}^{d-1} (-1)^{j} \cdot \cpc_{n,j} \cdot \trp_{n,d-j} \ \ + \ \ (-1)^{d} \cdot d \cdot \cpc_{n,d} \ = 0
\end{equation}
from the fact that
\begin{equation}\label{eq:girardnewtonsympolyappendix}
\sum_{j=0}^{d-1} (-1)^{j} \cdot e_j(\mathbf x) \cdot p_{d-j}(\mathbf x)
 \ \ + \ \ (-1)^{d} \cdot d \cdot e_d(\mathbf x) \ = 0,
\end{equation}
where $\mathbf x=(x_1,\ldots,x_n)$, and $e_j(\mathbf x) = \sum_{S\subseteq \{1,\ldots,n\}, |S|=j} e_{S_1}\cdots e_{S_j}$ is the elementary symmetric polynomial,
and $p_j(\mathbf x) = x_1^j + \cdots + x_n^j$ is the power sum polynomial.

Note that \eqref{eq:girardnewtonappendix} for diagonal matrices follows
directly from \eqref{eq:girardnewtonsympolyappendix} by observing
that for the diagonal matrix $D=\diag(x_1,\ldots,x_n)$ we have
$\cpc_{n,j}(D) = e_j(\mathbf x)$ and $\trp_{n,j}(D) = p_j(\mathbf x)$.
Since both $\cpc_{n,j}$ and $\trp_{n,j}$ are invariant under conjugation of their input variable matrices (note that $\cpc_{n,j}$ is a coefficient of $\det(X_n-t\cdot I_n)=\det(P X_n P^{-1}-t\cdot I_n)$),
and since \eqref{eq:girardnewtonappendix} is true when evaluated at diagonal matrices,
it follows \eqref{eq:girardnewtonappendix} holds when evaluated at any diagonalizable matrix.
Since $n\times n$ diagonalizable matrices are dense in the set of all $n\times n$ matrices,
\eqref{eq:girardnewtonappendix} holds for all $n\times n$ matrices.
Since $R$ is infinite, by multivariate interpolation \eqref{eq:girardnewtonappendix} is also true as an identity of polynomials.

\bigskip

\paragraph{Acknowledgments}
CI thanks
Robert Andrews for discussions about $\cpc_{n,d}$,
Markus Bl\"aser for discussions about $\VBFPT$,
and Darij Grinberg for feedback on a draft version, pointers to the literature, and making \cite{Gri25} available.
CI was partially supported by EPSRC grant EP/W014882/2.

\bigskip
\paragraph{Generative AI}
Generative AI erroneously claimed that Theorem~\ref{thm:CayleyHamilton} was already known and appeared in the Fadeev--LeVerrier algorithm.
During our investigation of this claim, we proved Corollary~\ref{cor:adjugate}.

{\footnotesize
\newcommand{\etalchar}[1]{$^{#1}$}

}

\end{document}